\newcommand{\PreserveBackslash}[1]{\let\temp=\\#1\let\\=\temp}
\newcolumntype{C}[1]{>{\PreserveBackslash\centering}p{#1}}
\crefname{algocf}{alg.}{algs.}
\Crefname{algocf}{Algorithm}{Algorithms}
\def\name{{\sc TAP}\xspace}
\newcommand{\node}{i}
\newcommand{\prefixkey}{\chi_{s}}
\newcommand{\prefixval}{\phi_{s}}
\newcommand{\bparagraph}[1]{\textbf{#1}.\ }
\newcommand{\gitrepo}{\url{https://github.com/tap-group/transparent-data-service}}
\begin{document}

\title{TAP: Transparent and Privacy-Preserving Data Services}

\author{Dani\"el Reijsbergen,$^\dagger$ Aung Maw,$^\dagger$ Zheng Yang,$^\ddagger$ Tien Tuan Anh Dinh$,^\dagger$ and Jianying Zhou$^\dagger$\\[0.1cm]}
\affiliation{$^\dagger$Singapore University of Technology and Design, Singapore, Singapore \\ $^\ddagger$Southwest University, Chongqing, China \\[0.2cm]}

\begin{abstract}
Users today expect more security from services that handle their data. In addition to traditional data
privacy and integrity requirements, they expect {\em transparency}, i.e., that the service's processing of the data is
verifiable by users and trusted auditors. Our goal is to build a multi-user system that provides data
privacy, integrity, and transparency for a large number of operations, while achieving practical
performance.  

To this end, we first identify the limitations of existing approaches that use \textit{authenticated data structures}. 
 We find that they fall into two categories: 1) those that hide each user's data
from other users, but have a limited range of verifiable operations (e.g., CONIKS, Merkle$^2$, and Proofs of Liabilities), and 2) those
that support a wide range of verifiable operations, but make all data publicly visible (e.g., IntegriDB and
FalconDB). We then present \name to address the above limitations. The key component of \name is a novel
tree data structure that supports efficient result verification, and relies on independent audits that use
zero-knowledge range proofs to show that the tree is constructed correctly without revealing user data.  \name
supports a broad range of verifiable operations, including quantiles and sample standard deviations. We conduct a comprehensive
evaluation of \name, and compare it against two state-of-the-art baselines, namely IntegriDB and Merkle$^2$,
showing that the system is practical at scale. 

\end{abstract}

\maketitle

\newcommand{\rand}{\stackrel{{\scriptscriptstyle\$}}{\leftarrow}} 

\newcommand{\negl}{\ensuremath{\mathsf{negl}(\kappa)}}
\newcommand{\GameP}{\ensuremath{\mathsf{Game}}}
\newcommand{\Adv}{\ensuremath{\mathsf{Adv}}}
\newcommand{\Adversary}{\ensuremath{\mathcal{A}}}

\newcommand{\ParmsC}{\ensuremath{\text{P}_\text{c}}}
\newcommand{\VSpace}{\ensuremath{\mathcal{V}_\text{c}}}
\newcommand{\RSpace}{\ensuremath{\mathcal{R}_\text{c}}}
\newcommand{\CSpace}{\ensuremath{\mathcal{C}_\text{c}}}

\renewcommand{\P}{\mathbb{P}}
\newcommand{\RR}{\mathcal{R}}
\newcommand{\DD}{\mathcal{D}}
\newcommand{\sens}{\Delta}
\newcommand{\query}{Q}
\newcommand{\res}{R}
\newcommand{\trueres}{R^*}
\newcommand{\data}{D}
\newcommand{\datb}{D'}
\newcommand{\noisea}{Z}
\newcommand{\noiseb}{Z'}
\newcommand{\pp}{p'}
\newcommand{\BB}{\mathcal{B}}
\newcommand{\bound}{b}

\newcommand{\ParmsTAP}{\ensuremath{\text{P}_\text{tap}}}
\newcommand{\TAPInitialize}{\ensuremath{\mathsf{Initialize}}}
\newcommand{\TAPEpochSecretGen}{\ensuremath{\mathsf{EpochSecretGen}}}

\newcommand{\TAPQuery}{\ensuremath{\mathsf{Query}}}
\newcommand{\TAPVerify}{\ensuremath{\mathsf{Verify}}}
\newcommand{\TAPCheckEpoch}{\ensuremath{\mathsf{EpochCheck}}}

\newcommand{\TAPDataSt}{\ensuremath{\text{DS}}}
\newcommand{\TAPProof}[1]{\ensuremath{\pi_{#1}}}
\newcommand{\TAPQueryMsg}[1]{\ensuremath{\text{M}_{#1}}}
\newcommand{\TAPQueryRestult}[1]{\ensuremath{\text{R}_{#1}}}

\newcommand{\TAPProofSpace}{\ensuremath{\mathcal{PF}_{\text{tap}}}}
\newcommand{\TAPSKSpace}{\ensuremath{\mathcal{SK}_{\text{tap}}}}
\newcommand{\TAPSSSpace}{\ensuremath{\mathcal{SS}_{\text{tap}}}}
\newcommand{\TAPVSpace}{\ensuremath{\mathcal{V}_{\text{tap}}}}
\newcommand{\TAPRSpace}{\ensuremath{\mathcal{R}_{\text{tap}}}}

\newcommand{\TAPMSpace}{\ensuremath{\mathcal{M}_{\text{tap}}}}
\newcommand{\TAPVTSpace}{\ensuremath{\mathcal{VT}_{\text{tap}}}}

\newcommand{\TAP}{\ensuremath{\mathsf{TAP}}}
\newcommand{\Privacy}{\ensuremath{\mathsf{Priv}}}
\newcommand{\Transparancy}{\ensuremath{\mathsf{Trans}}}

\newcommand{\QueryType}{\ensuremath{\mathsf{QType}}}
\newcommand{\Qinsert}{\ensuremath{\mathsf{insert}}}
\newcommand{\Qlookup}{\ensuremath{\mathsf{lookup}}}
\newcommand{\Qsum}{\ensuremath{\mathsf{sum}}}
\newcommand{\Qcount}{\ensuremath{\mathsf{count}}}
\newcommand{\Qaverage}{\ensuremath{\mathsf{average}}}
\newcommand{\Qminmax}{\ensuremath{\mathsf{min\text{-}max}}}
\newcommand{\Qquantile}{\ensuremath{\mathsf{quantile}}}

\newcommand{\OracleTAPQuery}{\ensuremath{\mathcal{O}_{\mathsf{MQ}}}}
\newcommand{\OracleInsertH}{\ensuremath{\mathcal{O}_{\mathsf{IH}}}}
\newcommand{\OracleCorrupt}{\ensuremath{\mathcal{O}_{\mathsf{RV}}}}

\newcommand{\CorruptList}{\ensuremath{\mathsf{CL}}}

\newcommand{\PPTPRelation}{\ensuremath{\mathcal{RL}}}
\newcommand{\Vmax}{\ensuremath{\mathsf{v}_{\text{max}}}}

\newcommand{\CN}{\ensuremath{\text{cnt}}}
\newcommand{\DN}{\ensuremath{\mathsf{DN}}}

\newcommand{\Qlist}{\ensuremath{\mathsf{QL}}}
\newcommand{\Hlist}{\ensuremath{\mathsf{HL}}}

\newcommand{\Qcouter}{\ensuremath{\text{f}}}
\newcommand{\Ccounter}{\ensuremath{\mathsf{g}}}
\newcommand{\Icounter}{\ensuremath{\mathsf{l}}}
\newcommand{\PrivacyType}{\ensuremath{\mathsf{PT}}}

\newcommand{\OracleVChallenge}{\ensuremath{\mathsf{PrivChallenge}}}
\newcommand{\OracleMChallenge}{\ensuremath{\mathsf{MinMaxChallenge}}}
\newcommand{\ValuePrivacy}{\ensuremath{\mathsf{value\text{-}priv}}}
\newcommand{\MMPrivacy}{\ensuremath{\mathsf{minmax\text{-}priv}}}
\newcommand{\MsgSUM}{\ensuremath{\mathsf{MSUM}}}

\section{Introduction}
\label{sec:introduction}

Many of today's applications collect large amounts of user data and make decisions that have a direct impact on the user.
One example is a utility company that collects power
usage data from users and charges them different rates based on peak/off-peak periods per region. 
Another example is a road pricing system that determines real-time traffic conditions based on the number of cars on the
road, and charges motorists appropriate congestion fees. 
The third example is
an advertising service that monitors clicks and pays the publisher for displaying the advertisement based on 
the click-through rate. One desirable property of the applications above is {\em transparency} \cite{chen2008access,pang2005verifying}, which allows
users to verify that the computation done on their data has been executed correctly. This property is stronger than
simply ensuring data \textit{integrity}, as it protects users from malicious or
compromised service providers who ignore data or tamper with the computation.

\begin{figure}[t]
\centering
\includegraphics[width=0.9\linewidth]{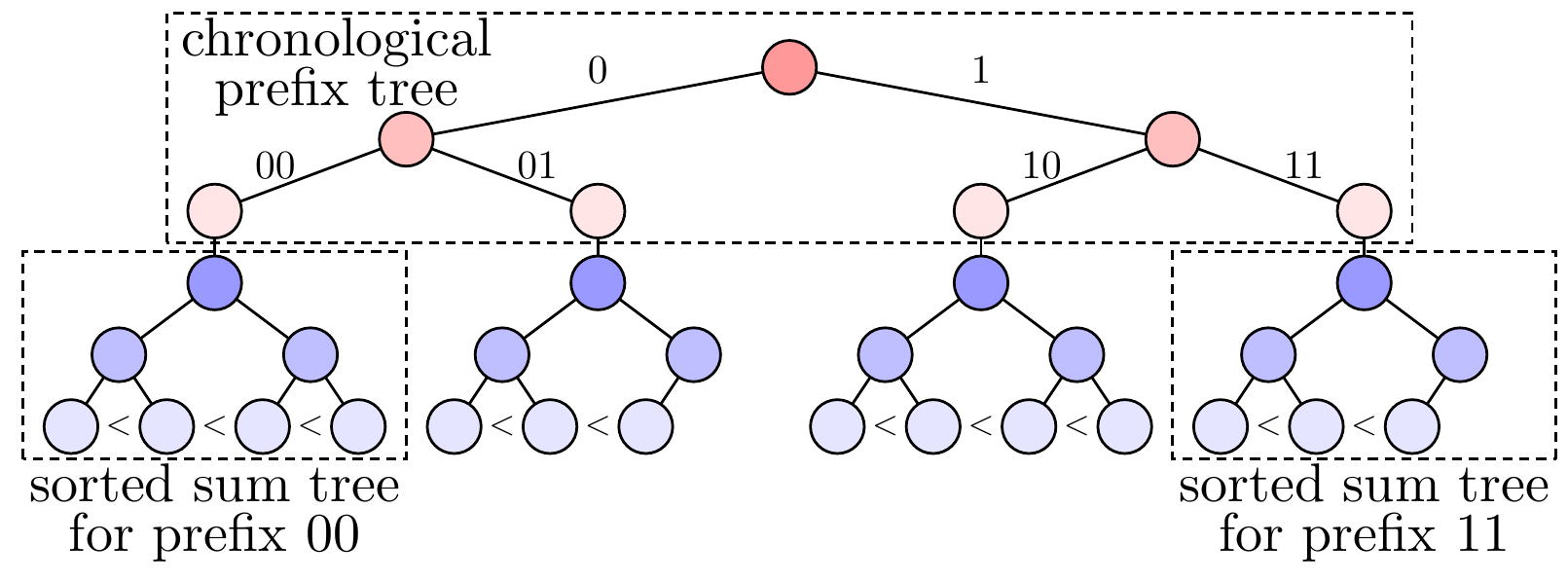}
\caption{\name's authenticated data structure.} 
\label{fig:solution_ads} 
\end{figure}

One approach toward transparent data services is for the provider to make the raw data available to users.
For example, the provider can use a public bulletin board to store the data. Users can
then perform computations directly on the data,
 but they
have no data \textit{privacy}. This is unacceptable in applications such as dynamic road pricing, as it would allow any user to track the movements of other users using the raw data. Alternatively, the service provider could store the data and return only query results. For example, an advertising service may have an API that returns the total number of clicks on an advertisement over a given period. However, this approach cannot guarantee transparency, which is troublesome when the provider has a financial incentive to falsify query results.

Our goal is to build a data service system with the following properties. First, it supports a wide range of queries that compute \textit{aggregates} over the data of many independent users.
Second, it protects data privacy from adversarial users. Third, it protects data integrity and transparency from an
adversarial provider. Finally, it has a reasonable
\textit{performance} overhead, and scales well when the number of users increases.

One important primitive for realizing our goal is the \textit{Authenticated Data
Structure}~(ADS)~\cite{tamassia2003authenticated}, which allows users to detect incorrect
results returned by the provider. 
However, all ADS proposals in the literature fall into two broad categories, which both
fall short in meeting our requirements. 
The first category consists of approaches that guarantee privacy, but support a limited set of queries. This category includes \textit{transparency logs} such as
certificate transparency \cite{laurie2014certificate} and its variants such as revocation transparency
\cite{laurie2012revocation}, extended certificate transparency \cite{ryan2014enhanced}, Trillian
\cite{eijdenberg2015verifiable}, CONIKS \cite{melara2015coniks}, and Merkle$^2$ \cite{hu2021merkle}.
Transparency logs store data in the form of key-value tuples and allow for public auditing. 
However, transparency logs only support data insertion, removal, and look-up
operations, and thus fail to meet our first requirement. Also included in the first category are Proofs of Liabilities (PoLs) \cite{dagher2015provisions,ji2021generalized}, which allow users to prove that certain sums of user values are non-negative without revealing the underlying values.
The second category consists of SQL-based
authenticated databases such as IntegriDB \cite{zhang2015integridb} and FalconDB \cite{peng2020falcondb}.
These databases allow users to verify the results of a wide range of SQL queries -- i.e., sum, count, average, min, and max.
However, they are either designed for a single user~\cite{zhang2015integridb}, thus failing our first
requirement, or make all insert queries public \cite{peng2020falcondb}, thus violating the second
requirement.  Other approaches exist that only support transparent range queries \cite{chen2008access,pang2005verifying,poddar2016arx} 
and hence fail the first requirement.

In this work, we present a \underline{T}ransparent \underline{a}nd \underline{P}rivacy-Preserving ADS named \textit{TAP}, a data service system that meets our four design goals. It addresses the limitations of
existing systems with a novel tree data structure, depicted in \Cref{fig:solution_ads}, that combines the features of existing approaches that are best suited to our context. The data structure consists of a chronological prefix tree (as in Merkle$^2$),
and each leaf in the
prefix tree is the root of a Merkle sum tree (as in PoLs) that is sorted (as in IntegriDB).  \name adopts the same system model as
CONIKS and Certificate Transparency, in which individual users \textit{monitor} the inclusion of their data, and there
exist some \textit{auditors} that verify the relevant properties -- e.g., ordered and append-only -- of the data
structure. The prefix trees enable efficient monitoring and auditing, similar to the data structures of CONIKS and Merkle$^2$. Meanwhile, the sorted sum trees enable fast verification for a wide range of operations, including sum, count, average,
min, max, and sample standard deviation queries. \name also supports the quantile query~\cite{li2010authenticated} that allows for the computation of fine-grained statistics, e.g., the median or the 5th percentile on sliding
windows.

\name is designed to store data from multiple users, thus meeting the first requirement. It protects data privacy -- the second requirement -- by storing cryptographic commitments instead of raw data, and by publishing zero-knowledge
proofs. \name's Merkle tree structure ensures data integrity and allows users to verify the correctness of a
broad range of queries -- the third requirement -- by generating Merkle proofs and
zero-knowledge range proofs for the commitments' underlying values. In our setting -- i.e., a single data table in which aggregates are computed over sliding windows -- \name has better performance
than previous systems, because it maintains one single tree, as opposed to the many trees in IntegriDB, and the
tree is smaller than the tree in Merkle$^2$. 
The computation and bandwidth overheads of \name are linear in the size of the sliding windows, but logarithmic
in the size of the entire tree. 

To reason about the security of \name, we require the following: that each user adds one data entry per time slot, that the set of users (but not their data) at each time slot is known to a super-auditor (e.g., a regulator or watchdog), and that the fraction of adversarial users is bounded.
We present a detailed analysis of the properties of TAP in this setting, and find an explicit tradeoff between transparency and privacy. We focus on guaranteeing perfect transparency at the cost of revealing query results, which cannot be trivially linked to user identities. 
In \Cref{sec:differential_privacy}, we discuss a method that 
guarantees $(\epsilon,\delta)$-differential privacy \cite{dwork2008differential}. 
Our final contribution is a full, publicly accessible implementation of \name, and we conduct a broad range of experiments to evaluate its performance. 
 We
compare \name against two relevant baselines -- Merkle$^2$ and
IntegriDB. The results show that the system has reasonable overhead, and that it outperforms the
baselines in many cases.  

\bparagraph{Contributions} We make the following contributions:
\begin{enumerate}
\setlength{\itemsep}{0pt}
\item We present a survey of existing ADS approaches, and discuss their
limitations in today's emerging applications. 

\item We present \name, a multi-user data service whose ADS combines elements from CONIKS, Merkle$^2$, IntegriDB, and PoLs to protect data privacy and integrity, while
providing transparency to a wide range of operations.

\item We formally analyze \name and prove that it only reveals the results of queries. 

\item We present a full implementation of \name and evaluate its performance.
We compare it against two baselines, namely IntegriDB and Merkle$^2$, and show that \name outperforms the
baselines in many cases.
\end{enumerate}

\bparagraph{Outline} The remainder of this work is organized as follows. Section~\ref{sec:system_model}
describes the system model, use case examples, threat model, and our design goals. Section~\ref{sec:survey}
discusses related systems built on top of ADSs. Section~\ref{sec:solution} presents
\name. Section~\ref{sec:analysis} provides security and performance analysis of \name, and discusses
its current limitations. Section~\ref{sec:evaluation} contains the detailed performance evaluation and
comparison against two state-of-the-art baselines. We discuss the practical aspects of \name in Section~\ref{sec:discussion}, and Section~\ref{sec:conclusions} concludes the paper.

\section{Model \& Requirements}
\label{sec:system_model}

In this section, we first discuss the general system and data models. Next, we present three use cases for transparent data services and discuss how they fit into our models. Finally, we present the threat model and system
requirements. 

\subsection{Model Entities}
\label{sec:model_entities}

Our system model
consists of the following types of entities:

\bparagraph{Users} Users send data to the server and issue \textit{queries} on the aggregate data through a client. Each user \textit{monitors} the data structure by verifying
that her data is properly stored by the server and verifies that query results are computed correctly. In practice, monitoring can be automated, e.g., the app on the user's mobile phone that shows bills or payments can verify the displayed values by querying the server's ADS.

\bparagraph{Server} The server stores the data provided by the users in a database, and maintains an
ADS on top of the data. It computes \textit{responses} to user queries, and generates \textit{proofs} for the
responses using the ADS.  

\bparagraph{Auditors} Auditors validate the server's ADS. In particular, they verify
that it is \textit{append-only}, i.e., data is never modified or deleted, and that certain data has been \textit{sorted} correctly. We will discuss these checks in more detail in later sections. 

\bparagraph{Bulletin board} The server periodically publishes the {\em digest} of its ADS to an immutable bulletin board, e.g., a public blockchain. 
Users and auditors download the latest digests during monitoring, auditing, and query verification. The only goal of the bulletin board is to prevent equivocation -- i.e., the server presenting different versions of the ADS to different entities -- and can therefore be replaced by a {gossip protocol} that would serve the same purpose.

\Cref{fig:modelsa} displays a na\"ive design which assumes that the server
is fully trusted, hence allowing an
unscrupulous operator to return incorrect query results. \Cref{fig:modelsb} displays our system model, in which the server is untrusted.

\begin{figure}[t]
\centering
\subfloat[][\label{fig:modelsa}]{\includegraphics[width=0.4125\linewidth,valign=c]{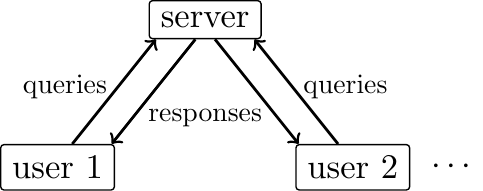}
\vphantom{\includegraphics[width=0.5\linewidth,valign=c]{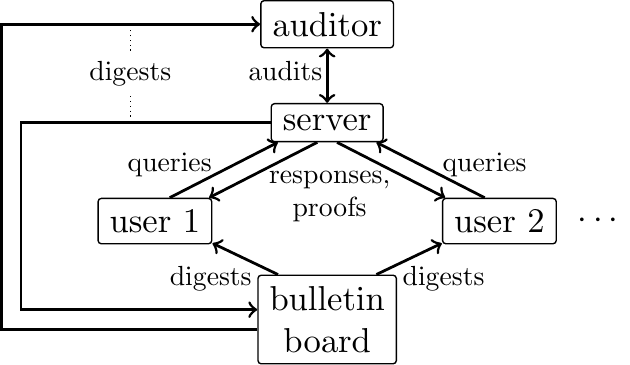}}}
\hspace{0.03\linewidth}
\subfloat[][\label{fig:modelsb}]{\includegraphics[width=0.525\linewidth,valign=c]{motivating_example_3.pdf}}
\caption{Left: system model with a trusted server. Right: \name's system model with an untrusted server.} 

\end{figure}

\subsection{Data Model}
\label{sec:data_model}

We consider a simple relational data model. The schema consists of the following attributes.  

\bparagraph{Time} Time is modeled as a sequence of \textit{epochs}, i.e., time slots such as hours or days, and is represented as
an integer.

\bparagraph{Value} This attribute contains the privacy-sensitive data. For simplicity, we assume that they
are non-negative integers.  %

\bparagraph{ID} This represents the user ID associated with the value. 

\bparagraph{Type} These are string attributes that capture metadata about users.

We denote the number of Type attributes by $m$, and the total number of attributes in our schema by $m'=m+3$.
We assume that there is one data entry per ID per epoch. This assumption prevents a single adversarial user from arbitrarily skewing query results, as we will discuss in \Cref{sec:analysis}. However, a single entity may control multiple IDs, e.g., a single person owning multiple cars for congestion pricing. 

\subsection{Use Case Examples}
\label{sec:use_cases}

In the following, we present three illustrative use cases for a transparent data service. In each case, the users receive \textit{bills} or \textit{rewards} that are dependent on the recorded data, including (potentially) the data of others. In all cases, the Time attribute is the smallest time interval in which the billing or reward rate remains the same. In the following, we discuss the key features of each use case, including the ID, Value and Type attributes, the typical scale of the system, and types of queries. 

\textbf{Smart Grids.} Our first example is a smart grid that enables \textit{peak/off-peak pricing}, i.e., customers are billed at a higher rate for power usage when system-level demand is high. 
In this setting, each \textit{ID} corresponds to a smart meter's serial number, and the \textit{Value} to the customer's power usage during the epoch. Possible \textit{Type} attributes include the customer's geographic location, and the customer's type (industrial, commercial, or residential). The service is maintained by the power retailer. To illustrate the size of a typical smart grid, we consider SP Group, which is the largest power retailer in Singapore with 1.6 million customers in 2022 \cite{spgroup}. For peak/off-peak pricing, we record power usage once per hour. For the Type attributes, we consider the 28 postal code regions of Singapore, and the 3 customer types mentioned above. 

In this setting, the main design goal for a transparent data service is to allow customers to verify their electricity bills. Each customer's bill for a given period $T$ can be expressed as the sum, over each epoch $t$ in $T$, of the electricity price in $t$ multiplied by the customer's power usage in $t$. The first advantage of our data service is that it allows customers to verify their usage and bills. However, a major advantage of \name is that it also enables more advanced pricing methods, such as making the price dependent on the \textit{system-level} power usage \cite{esorics21}. The data service allows the users to verify claims about the system-level through sum queries. Finally, the data service allows for the computation of aggregate statistics, e.g., 1) the average and standard deviation of power consumption of all users within the same region, 2) the maximum and minimum usage in a region during a given period, and 3) the top $5\%$ consumption across all residential users. 

\textbf{Congestion Pricing.} Our second example is a system in which vehicles are charged when they cross designated road sections that are heavily congested during peak periods. To detect vehicles, \textit{gantries} with cameras are placed alongside the designated road sections. 
Each \textit{ID} corresponds to a \textit{pair} of vehicle and gantry IDs, and the \textit{Value} to the number of times the vehicle crossed the gantry during the epoch. Possible \textit{Type} attributes include the gantry ID and the type of vehicle, e.g., car, truck, or motorcycle. 
As an illustrative example, we consider the Electronic Road Pricing (ERP) system in Singapore. As of 2022, the ERP system consists of 77 gantries of which around 20 are located at the entries and exits of the highly congested Central Business District (CBD) \cite{sgcarmart}. 

As in smart grids, the data service allows users to verify their bills even for advanced pricing schemes. For example, the price can be made dependent on the total number of registered vehicles that have crossed a gantry in an epoch, which would act as a proxy for the real congestion in the system. An even more advanced pricing scheme would make the price of entering the CBD dependent on the sum of vehicle entries into the CBD minus the sum of vehicle exits.

\textbf{Digital Advertising.} Our final example is a system that rewards websites who display digital advertisements  \cite{bashir2016tracing}. In particular, a website owner receives a reward whenever an advertisement displayed on the website is clicked.\footnote{This is a simplified version of online advertising, as advertisers and publishers are often interested in more detail than just click-through rates.}
In this setting, \textit{ID} corresponds to a pair of website and ad IDs, and the \textit{Value} to the number of times the advertisement was clicked on via the website (i.e., the click-through rate). Possible Type attributes include the category and size of the website, or characteristics of the advertisement. The advertisement platform operates the server, while the website owners and advertisers monitor their data entries.
The biggest online advertisement platform is Google Ads with millions of registered websites and advertisers. However, there are also digital ad platforms such as sixads \cite{sixads} that serve around 100\,000 websites. 

The data service allows both the advertisers and the websites to monitor the click-through data provided by the advertising platform. This makes it easier to detect misbehavior, e.g., the platform underreporting (or overreporting) the number of clicks to the website owner (or advertiser) for profit. Finally, it enables more advanced reward schemes, e.g., in which the total reward paid by any single advertiser is limited. 

\subsection{Threat Model}
We consider two types of threats.
The first consists of \textit{honest-but-curious} users who follow the protocol but try to learn the
privacy-sensitive data of other users. The second is an \textit{adversarial server} who tries to falsify query results by tampering with the data and/or query
execution. 
This assumption captures unscrupulous server owners, insider threats, and external attacks via software
vulnerabilities. We do not consider threats to privacy that stem from collusion between the server and users, because the server is always free to send raw data to adversarial users out-of-protocol. However, a limited set of adversarial users may collude with the server to falsify query results -- this includes fake users created by the server. 
We discuss how users disseminate information about server misbehavior in \Cref{sec:discussion}.

We assume that all communication between
the server and honest users is done via secure channels. The auditors are only trusted to validate the server's structural properties of the server's ADS, but not individual data entries. 
In fact, the trust assumption for auditors is the same as for users, i.e., any user with large computation and network resources
can also act as an auditor. In practice, we also assume that one ``super'' auditor (e.g., a regulator) has the ability to verify user identities, to prevent service providers from creating an unlimited number of fake users.  Finally, we assume that the public bulletin board -- or, alternatively, the gossip protocol for users -- is trusted and able to detect equivocation. 

\subsection{Requirements}
\label{sec:requirements}
Our goal is to design a system that meets the following requirements. Due to space constraints, we only present
informal definitions of the security requirements, and leave the formal definitions to Section~\ref{sec:analysis} and the appendix.

\bparagraph{(R1) Rich operations for multiple users} The system supports a wide range of operations (or queries) on the aggregate data generated by multiple independent users. 

\bparagraph{(R2) Data privacy} A user can only learn a limited number of other users' values by performing queries.

\bparagraph{(R3a) Data integrity} The server cannot change the data without being detected.  

\bparagraph{(R3b) Transparency} For each supported query, the server cannot convince the user to accept incorrect results computed from incorrect, incomplete, or artificial data.  

\bparagraph{(R4) Efficiency} The computation, storage, and network costs at the server and the user's client are small.
Query overheads grow sublinearly with the number of users and epochs.

\section{Existing Solutions}
\label{sec:survey}
In this section, we discuss existing approaches that partially meet our requirements. We divide them
into three categories: transparency logs, PoLs, and SQL-based authenticated databases. 
Visual representations of various system models, and several ADS designs whose features are incorporated in \name, can be found in \Cref{fig:existing_solutions}. 
We conclude the section by discussing which requirements are met by these approaches.  

\begin{figure*}[t]
\centering
\subfloat[][\mbox{CONIKS}\label{fig:coniks_system}]{\includegraphics[width=0.2625\linewidth]{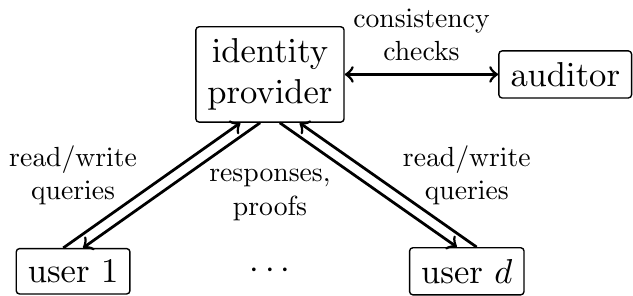}}\hspace{0.05\linewidth}
\subfloat[][IntegriDB\label{fig:integridb_system}]{\includegraphics[width=0.25\linewidth]{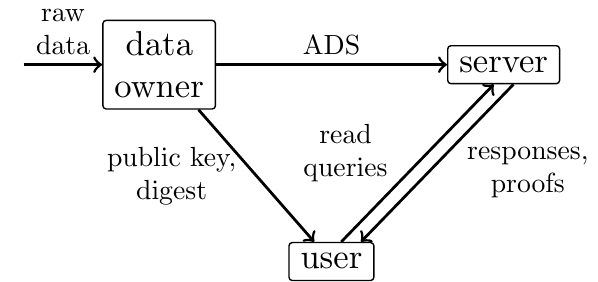}}
\hspace{0.05\linewidth}
\subfloat[][\mbox{FalconDB}\label{fig:falcondb_system}]{\includegraphics[width=0.18\linewidth]{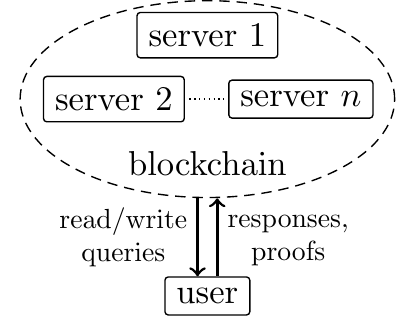}}
\caption{System models of CONIKS, IntegriDB, and FalconDB.}\label{fig:existing_solutions}
\end{figure*} 

\begin{figure*}[t]
\centering
\subfloat[][CONIKS\label{fig:coniks_ads}]{\includegraphics[width=0.375\linewidth]{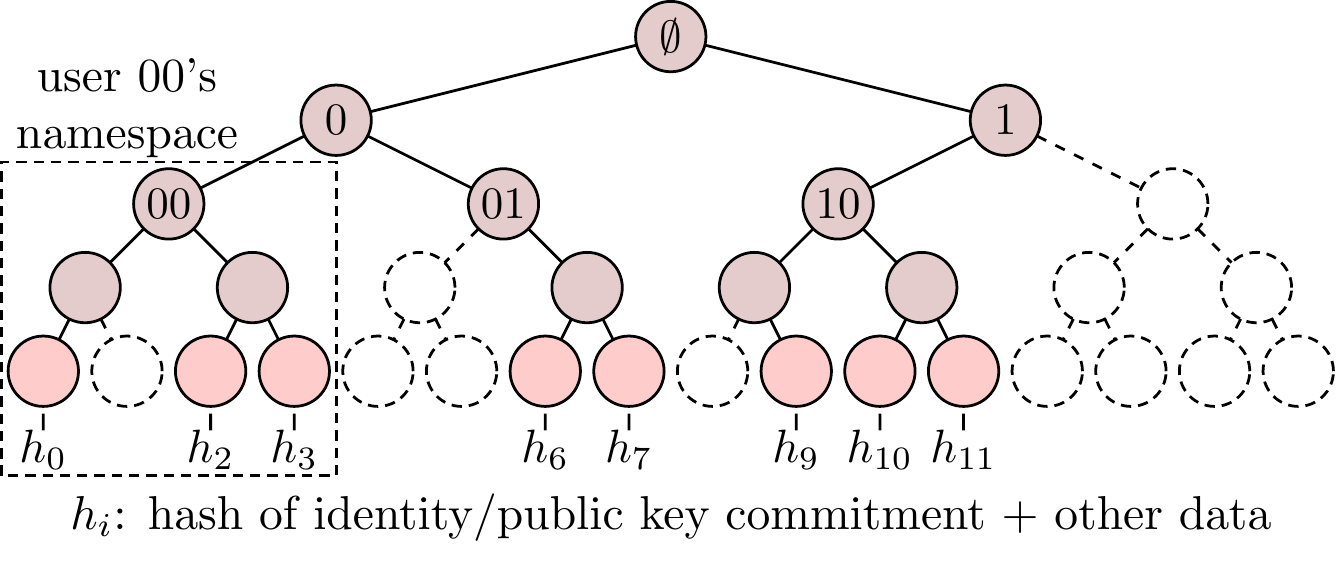}}
\hspace{0.005\linewidth}
\subfloat[][Proofs-of-Liabilities\label{fig:pol_ads}]{\includegraphics[width=0.295\linewidth]{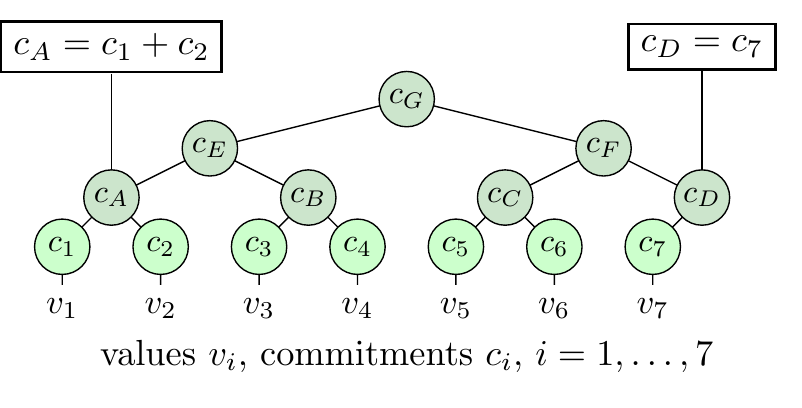}}
\subfloat[][IntegriDB and FalconDB\label{fig:integridb_ads}]{\includegraphics[width=0.295\linewidth]{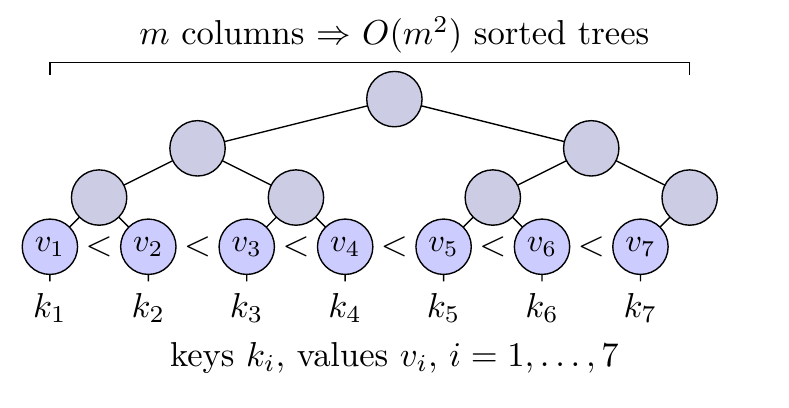}}
\caption{ADS designs of CONIKS, Proofs-of-Liabilities, IntegriDB, and FalconDB.}\label{fig:existing_solutions}
\end{figure*} 

\subsection{Transparency Logs} 
Transparency logs are append-only data structures whose integrity is protected by a Merkle tree. They provide
efficient cryptographic proofs that show, e.g., that an entry is included in the log, or that one log is a prefix
of another log. 

\bparagraph{Certificate Transparency (CT)} CT~\cite{laurie2014certificate} addresses the problem of
compromised certificate authorities by publishing the certificates on a transparency log. 
The system model of CT consists of some certificate authorities
who issue certificates and insert them into the log, and users who search for specific certificates in the log.
CT relies on a monitor to ensure the consistency of the log. The core data structure is a Merkle tree in which the certificates
are hashed and stored at the leaves, and the server signs the root of the tree. CT has been extended to support efficient verification of certificate revocations
\cite{laurie2012revocation,ryan2014enhanced}. It has also been generalized into an abstraction called a
\textit{verifiable log} which is implemented as Google's Trillian \cite{eijdenberg2015verifiable}.

\bparagraph{CONIKS} CONIKS \cite{melara2015coniks} extends CT to support transparent name-to-key bindings. It
allows for efficient proofs of non-inclusion so that users can easily check for unauthorized name-to-key
bindings in their namespaces. CONIKS' system model, depicted in \Cref{fig:coniks_system} for a system with $d$ users, is similar to that of CT, but users are more active in
monitoring their key bindings. CONIKS uses \textit{prefix trees}, depicted in \Cref{fig:coniks_ads}, for efficient
non-inclusion proofs. It hides bindings by storing only their commitments at the leaves. It also hides the
total number of users by adding dummy nodes. 

\bparagraph{Merkle$^2$} Merkle$^2$ \cite{hu2021merkle} extends CONIKS through a data structure that enables
efficient auditing. The data structure combines
a \textit{chronological Merkle tree} with prefix trees. The leaves of the chronological tree only extend to
the right (append-only). Each internal node protects a set of leaves in the chronological tree, and stores the
root of a prefix tree that has the same set of leaves. 

\subsection{Proofs of Liabilities (PoLs)} PoLs \cite{dagher2015provisions} are designed to prove solvency --
i.e., assets being greater than liabilities -- in a setting where users are fully anonymous and their individual assets and liabilities are privacy-sensitive. The main data structure in PoLs is a \textit{Merkle sum tree}, as depicted in \Cref{fig:pol_ads}, that stores additively homomorphic \textit{commitments} to the values of assets and liabilities in the leaves. Intermediate nodes store the sums of the commitments in their children. To show that the sum of assets and liabilities in the leaves is
non-negative, PoLs use \textit{zero-knowledge range proofs}. 
PoLs
were generalized in \cite{ji2021generalized} to use cases beyond proving solvency.

\subsection{SQL-Based Authenticated Databases}
\bparagraph{IntegriDB} The system model of IntegriDB~\cite{zhang2015integridb} is designed for outsourced databases. In particular, the user uploads data
and metadata to an untrusted server, depicted in \Cref{fig:integridb_system}. The server executes user queries and generates proofs based on the metadata
to show that the results are correct. 
IntegriDB supports data insertions, join queries on multiple tables,
multidimensional range queries, and sum, count, average, min, and max queries. IntegriDB creates a sorted tree for each
column pair, resulting in $\frac{1}{2}(m^2-m)$ trees for a table with $m$ columns, as depicted in \Cref{fig:integridb_ads}. In each internal node of the tree, IntegriDB stores a polynomial
over the values in the leaves of the internal node's subtree. The polynomials enable proofs that sum or range queries have been performed
over the correct set of leaves. Meanwhile, the sorted nature of the trees allows users to verify min and max queries. Another SQL-based ADS, vSQL \cite{zhang2017vsql}, supports generic SQL queries and has similar performance as IntegriDB.

\bparagraph{FalconDB} FalconDB \cite{peng2020falcondb} combines IntegriDB with blockchains.
In FalconDB's system model, depicted in \Cref{fig:falcondb_system}, a smart contract maintains the
ADS and ensures that it is globally consistent. Queries are performed directly by the
servers, which run IntegriDB, without going through consensus (except for insertions and removals). Users
verify the results by checking that the ADSs at the servers are the same as in the
blockchain. 

\subsection{Limitations of Existing Solutions}
\label{sec:limitations}
Transparency logs meet our requirements of multiple user support, privacy, integrity, and efficiency. However,
the range of supported operations, namely insertion, deletion, inclusion, and non-inclusion, is too limited to achieve R1. Similarly, PoLs do not achieve R1 as they only support sum queries, but not standard deviations, minima/maxima, or quantiles.

IntegriDB and Falcon achieve R3a, R3b, and R4, but cannot support R1 and R2 simultaneously. 
IntegriDB's system model assumes that a single user generates the ADS correctly before uploading it to the server. As such, IntegriDB cannot be easily extended to support
multiple independent users. In particular, the server can maintain separate databases and ADSs,
but users cannot verify operations on the aggregate data without building the data structures on the entire
data by themselves. The users must therefore know each other's data, i.e., the system cannot simultaxneously achieve R1
and R2. FalconDB supports multiple users, but all data are stored on the blockchain, thus it
does not meet R2. 

\section{\name}
\label{sec:solution}
In this section, we describe \name, a transparent data service that overcomes the limitations discussed in \Cref{sec:limitations}. We start by describing the main building blocks, and then
explain the core ADS and how it supports a rich set of operations. Finally, we
discuss how audits are performed in \name.

\subsection{Preliminaries}
\label{sec:preliminaries}

We use several cryptographic primitives. We only define them briefly here due to space constraints, and
refer readers to the literature for their formal and complete definitions. 

A \textit{hash function} $H$ takes as input a value $x \in \{0,1\}^*$ and outputs a value in
$\{0,1\}^{l_{H}}$, where $l_{H}$ is the output length of the hash function. The function is \textit{collision-resistant} if
the probability of finding two different inputs that produce the same hash output is negligible.

A \textit{commitment scheme} $\textsc{COM}$ consists of two algorithms. $\textsc{COM.Setup}$ takes as input a
security parameter $1^{\kappa}$ and outputs the commitment parameters $P_{c}$. Let $V_c$ and $R_c$ be the sets of all possible data and random values, respectively. $\textsc{COM.Commit}$ takes as
input the parameters $P_c$, a value $v \in V_c$, and a random value $r \in R_c$ (which we also call a \textit{seed} to avoid confusion with data values), and outputs a commitment $c'$. $\textsc{COM}$ is
called \textit{hiding} when $c'$ reveals nothing about $v$, and \textit{binding} when given a commitment of $v$
and $r$, it is computationally infeasible to find another $v'$ and $r'$ that produce the same commitment.
We use the shorthand notation $C(v,r) = \textsc{COM.Commit}(P_c,v,r)$, with $P_c$ set during initialization.
The scheme is \textit{additively homomorphic} if for any $v_0,v_1 \in V_{c}$ and $r_0,r_1 \in R_c$, it holds that 
$$
C(v_0, r_0) + C(v_1, r_1) = C(v_0 + v_1, r_0 + r_1).
$$
A \textit{non-interactive zero-knowledge proof system} consists of three algorithms. $\textsc{NIZK.Setup}$
takes as input a security parameter $1^\kappa$ and outputs system parameters $P_{zk}$. $\textsc{NIZK.Prove}$
takes as input the system parameters $P_{zk}$ and a statement-witness pair $(s,w)$ and outputs a proof $\pi$.
$\textsc{NIZK.Verify}$ takes as input $P_{zk}$, a statement $s$, and a proof $\pi$, and outputs \textsc{true}
or \textsc{false}.  
The proof system $\textsc{NIZK}$ satisfies {\em zero-knowledge} if the generated proofs reveal nothing about
the witnesses, and {\em simulation-extractability} if for any proof generated by the adversary, there exists
an efficient algorithm to extract the corresponding witnesses with a trap door. In \name, we use
$\textsc{NIZK}$ over the following relation for the range proofs: $$ R_{zk}(v_{\max}) = \{(c,v_{\max}), (v,r),
| c = C(v,r) \wedge v \in [0,v_{\max})\}
$$
We can prove statements of the form $v \in [a,b]$ by proving $v - b + K \in [0, K]$ and 
$v - a \in [0,K]$ for large $K$ \cite{camenisch2008efficient}.

A \textit{Merkle tree} is a binary tree in which each node $\node$ stores a hash value $h_{\node}$. The hashes
have the following structure. The leaves contain the hashes of the values stored in the tree. For the internal
nodes it holds that \mbox{$h_{\node} = H(h_{\textsc{left}(\node)} \,|\, h_{\textsc{right}(\node)})$}, where
$|$ represents concatenation, $\textsc{left}(\node)$ and $\textsc{right}(\node)$ respectively return the left and right
child of $\node$ (if there is no child, they return $0$), and $h_0 = (0)^{l_{H}}$. Similarly, a
Merkle \textit{sum tree} contains in each leaf $\node$ a commitment $c_{\node} = C(v,r)$ where $v$ is the leaf's value and $r$ its seed,
and each internal node $\node$ stores \mbox{$c_{\node} = c_{\textsc{left}(\node)} +
c_{\textsc{right}(\node)}$}. Inclusion of a leaf node $\node$ in a Merkle tree can be proven through a \textit{co-path}, i.e., the sibling nodes on the path between $\node$ and the tree's root. The prover can use the co-path and the leaf to rebuild the hash (or commitment) in the root, and compare it to its known value.

A \textit{prefix tree} is a binary tree in which each leaf corresponds to a key-value pair
$(\chi,\phi)$ where $\chi$ is a bit string of length $l_P$. Each internal node $\node$ stores the key $\chi_{\node}$ of
length $l'$, such that $l' < l_{P}$, and its left and right child nodes have the key $\chi_{\node}|0$ and
$\chi_{\node}|1$, respectively. A prefix tree is typically extremely sparse, and we only need to store the internal nodes that are on a
direct path between a leaf and the root (which has key $\emptyset$). 
A prefix tree can be extended to a Merkle prefix tree by including a hash of the last prefix bit and value in each leaf, and the hash of the last bit and child hashes in internal nodes.

\subsection{Data Structure}
\label{sec:ads}
The ADS in \name combines a single Merkle prefix tree (as in \Cref{fig:coniks_ads}) with multiple sorted Merkle sum trees (which combine the key features of Figures~\ref{fig:pol_ads}~and~\ref{fig:integridb_ads}). The \textit{key} of each leaf $\node$ in the
prefix tree corresponds to a unique combination of values of the Time and Type attributes, whereas the
\textit{value} of $\node$ is the root hash of a sorted Merkle sum tree. This Merkle sum tree is constructed
from the Value attributes of the data whose Time and Type attributes are equal to the prefix tree
leaf's key. The tree is sorted by the values of its leaves in ascending order. Each leaf not only stores
the raw value $v$, but also $v^2, v^3,\ldots,v^{z}$ for some system-wide integer $z$. These values enable the computation of advanced statistics (e.g., the standard deviation), while having no impact on the ordering
because $x^k$ is monotonically non-decreasing for $x,k \geq 0$. 

The prefix tree is stored in memory, whereas the full table is stored in a SQL database. We do not keep the
Merkle sum trees in memory except for their root hashes, which are stored in the prefix tree leaves, because 
the Merkle sum trees are easily constructed on-the-fly during queries.

\begin{table}[t]
\centering
\caption{Illustrative data for \Cref{fig:example_tree}.}
\label{tab:example}
\scalebox{0.8}{
\begin{tabular}{|cccc|}
\hline
Time & ID & Type & Value \\ \hline
0 & Alice & residential & 11 \\
0 & Bob & residential & 24 \\
0 & Carol & residential & 13 \\
1 & Alice & residential & 19 \\
1 & Bob & residential & 26 \\
1 & Carol & residential & 27 \\
1 & Dave & residential & 26 \\
1 & Erin & industrial & 36 \\ \hline
\end{tabular}
}
\end{table}

The data structure in \name is initialized as follows. For simplicity, we assume that the server takes as
input a table consisting of initial data from multiple users. Given a table with Time, ID, Type, Value attributes,  the
server generates a random seed $r_i$ for each row $i$ and stores it as a new attribute {\bf Seed}. 
The server then computes the sets of all \textit{unique combinations} of values of the Time and Type attributes; we denote this set by $\mathcal{S}$. For each unique value tuple $s \in \mathcal{S}$, we determine the array $V_s =
(v_{s\,i})_{i=1,\ldots,|V_s|}$ that contains the Value attributes of the rows whose Time and Type attributes match
the elements of $s$. The server sorts $V_s$ in ascending order, i.e., $v_{s\,i} \leq v_{s\,i+1}$ for all $i
=1,\ldots,|V_s|-1$. Let $u_i$ be the ID attribute of the row with corresponding Value $v_i$. For
each $s \in \mathcal{S}$ such that $|V_s| > 0$, the server creates a Merkle sum tree where the $i$th leaf contains the
following values: 
\begin{itemize} 
\item $c_{i\,j} = C(v^j_i,r_i)$ for \mbox{$j \leq z$}, where $v_i \in V_s$ and $r_i$ is the corresponding Seed attribute,
\item $l_{i} = 1 $, and 
\item $h_{i} = H(c_{i\,1}|\ldots|c_{i\,z}|H(u_i,s_0))$.
\end{itemize}
For the empty node $0$, we choose $c_{0} = C(0,0)$, $l_0 = 0$, and $h_{0} = H(0)$. Informally, $c_{i\,j}$ contains the
commitment of the $i$th value (in ascending order) to the power $j$, $l_{i}$ contains the \textit{count},
i.e., the number of leaves in the subtree rooted at the node (which is always $1$ for the leaf nodes),
and $h_{i}$ contains the leaf's hash.
Each internal node $\node$ contains the following values: 
\begin{itemize}
\item $c_{\node\,j} = c_{\textsc{left}(\node)\,j} + c_{\textsc{right}(\node)\,j}$
\item $l_{\node} = l_{\textsc{left}(\node)} + l_{\textsc{right}(\node)}$, and
\item \mbox{$h_{\node} = H(h_{\textsc{left}(\node)}|h_{\textsc{right}(\node)}|c_{\node\,1}|\ldots|c_{\node\,z}|l_{\node})$}.
\end{itemize}
The hash in the root of the sum tree is stored as the value in the prefix tree leaf whose prefix is the bit concatenation of the Time and Type values. After
initialization, the server sends the initial digest $\delta_{0}$, which is the hash of the Merkle prefix tree
root, to the bulletin board. For every value $v_i$ that is inserted, the server sends the random value $r_i$ in the
corresponding commitment to the user, which the user can later use to verify the commitment. 

For example, \Cref{fig:example_tree} depicts the ADS after processing the rows from \Cref{tab:example}, if `residential' and `industrial' are mapped to 0 and 1, respectively. Each sum tree leaf node (denoted by $\textnormal{i},\ldots,\textnormal{viii}$) contains the commitments $c_1\ldots,c_z$ of a Value column entry, and $h$ and $l=1$ as discussed previously. The intermediate sum tree nodes ${A},\ldots,{F}$ contain the values $l$, $c_1\ldots,c_z$, and $h$ computed from their child nodes. The prefix tree nodes $\emptyset,\ldots,\textnormal{11}$ contain hash values, and
each prefix tree leaf corresponds to a unique combination of Time/Type values. In this case, $\mathcal{S} = \{(0, 0),(0, 1),(1, 0),(1, 1)\}$, but there are no rows for which Time and Type respectively equal $0$ and $1$ (`industrial') -- as such, only three leaves (00, 10, and 11) are stored in the prefix tree.

\begin{figure}[t]
\centering
\includegraphics[width=\linewidth]{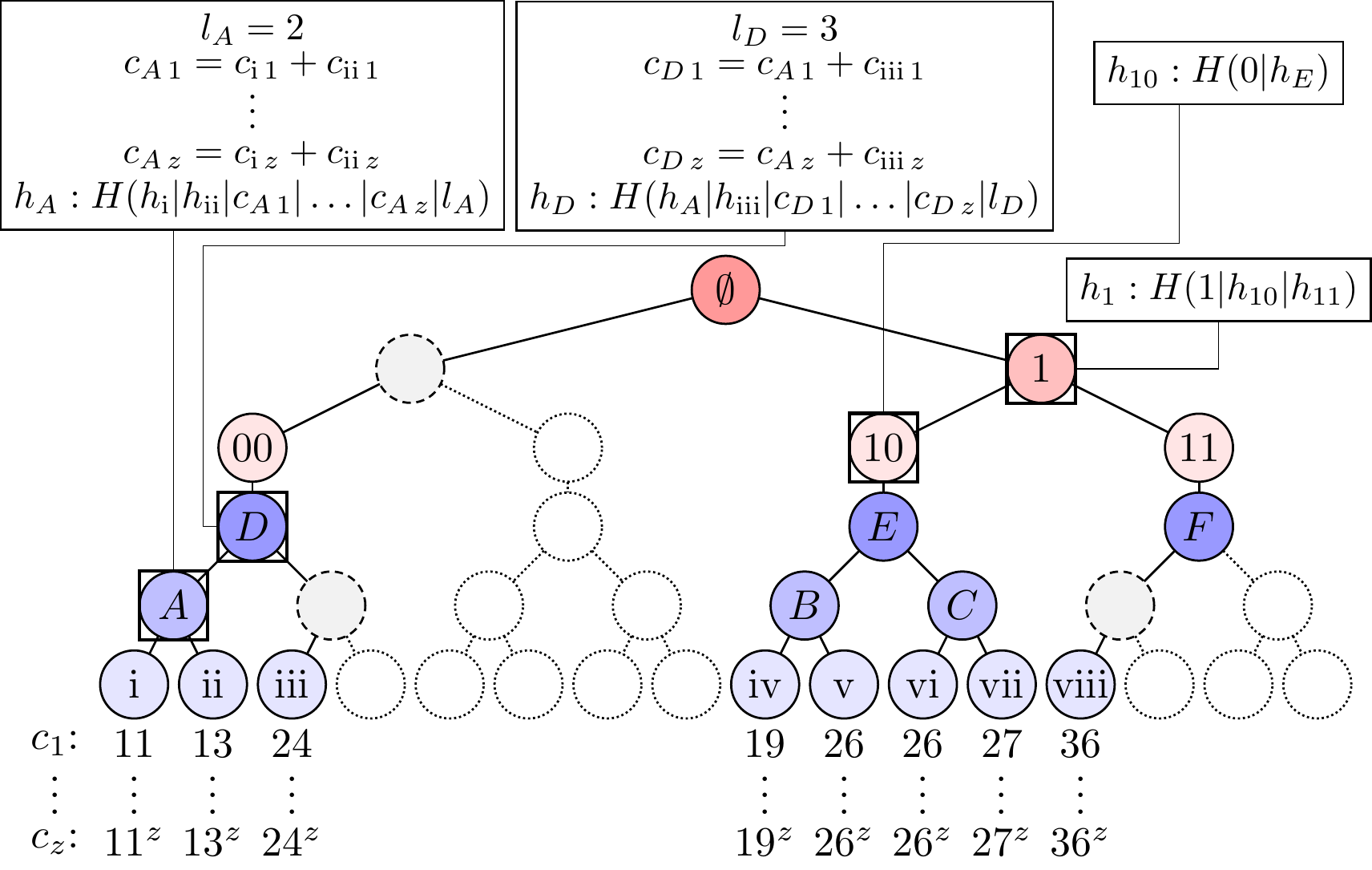}
\caption{\name's ADS after inserting the $8$ rows from \Cref{tab:example}.} 
\label{fig:example_tree} 
\end{figure}

\subsection{Queries}
\label{sec:queries}
\bparagraph{Insert} As mentioned in \Cref{sec:data_model}, only one new entry per user ID can be inserted into the tree per epoch. To insert the entries of epoch
$t>0$, the server computes $\mathcal{S}_t$, the set of unique combinations of values of the 
{Type} attributes, right-concatenated with the epoch $t$. It then determines the value sets $V_s$ for each \mbox{$s \in \mathcal{S}_t$}, and
constructs the Merkle sum trees whose leaves are the sorted values in $V_s$. 
Next, it inserts for each $s$ for which $|V_s|>0$ a new key-value pair $(\prefixkey,\prefixval)$ into the existing prefix tree,
where $\prefixkey$ is the the bit string that represents $s$, and $\prefixval$ is the root hash of the corresponding Merkle sum tree. The server sends
the digest $\delta_t$, i.e., the root of the updated Merkle prefix tree, to the bulletin board. Finally, it sends
to each user $i$ the random value $r_i$.  
Since the leading bits in the prefixes correspond to the Time attribute, the prefix tree is sorted chronologically. As in Merkle$^2$, this allows auditors to efficiency verify that the tree is append-only.

\bparagraph{Look-up} 
The user can search for the value $v$ tied to her ID and seed for a given specific Time attribute $t$. The server executes the query on the SQL database, and if it finds a result then it generates a proof
$\pi = (v, \pi_1, \pi_2)$ as follows. First, it computes a prefix $\chi$ using $t$ and the user's Type attributes.
Next, it constructs the Merkle inclusion proof $\pi_1$ for the leaf $(\chi,\phi)$ in the prefix tree, 
where $\phi$ is the value for key $\chi$. It then produces another inclusion proof $\pi_2$ for the value $C(v,r)$ in the Merkle sum tree whose root is
$\phi$. The user then verifies the proof by requesting the digest $\delta_t$ from the bulletin board,
computing $c=C(v,r)$, and verifying that the inclusion proofs are correct with respect to $c$ and the digest.

If no data entry is found, then the server generates the following non-existence proof: 
\mbox{$
\pi = (\pi',N^*,(h'_{\node})_{\node \in N^*})
$}
First, it computes a prefix $\chi$ as above, and constructs the Merkle inclusion
proof $\pi'$ for the leaf $(\chi,\phi)$ in the prefix tree, where $\phi$ again is the value in the leaf with prefix $\chi$. Let
$N^*$ be the set of leaves in the Merkle sum tree whose root is $\phi$. For each node $\node \in N^*$, the server
also computes $h'_\node = H(u_{i}|t)$. The user's client verifies the proof by checking the
prefix tree inclusion proof, and checking for all leaves $\node \in N^*$ that $h_\node =
H(c_{i\,1}|\ldots|c_{i\,z}|h'_\node)$. Finally, it rebuilds the sum tree root from the leaves and checks that it
matches $\phi$. 

\bparagraph{Range cover} 
Range queries are a subroutine of aggregate queries. Since \name does not reveal individual data
entries to unauthorized users, the server returns the set of prefix nodes that cover all entries in the
specified range. In particular, the query contains $S^* =
(t^{\min}, s^{\min}_1,\ldots,s^{\min}_m,t^{\max}, s^{\max}_1,\ldots,s^{\max}_m)$ where $t^{\min}$ corresponds to the smallest value of the Time attribute in the range, $t^{\max}$ to the largest value, and
$s^{\min}_1, s^{\min}_2, \ldots$ and $s^{\max}_1, s^{\max}_2, \ldots$ correspond to the Type attributes. The server returns a
set containing each prefix node $\node$ for which it holds that the subtree rooted at $\node$ contains the data entries whose Time and Type attributes overlap with $S^*$.  

To produce a proof, the server calls the function $\textnormal{ExtendRangeProof}$ as described in
\Cref{alg:range_proof} in \Cref{sec:pseudocode} with the root of the prefix tree, $\emptyset$, and $S^*$ as input arguments. This
function recursively calls itself on the node's children.
Given a set $N'$, the client requests the digest $\delta_t$ from the bulletin board and verifies the following
properties: $N'$ is properly formed (i.e., all hashes follow from the child hashes), the root
node included in $N'$ has a hash that equals $\delta_t$, and these nodes completely cover $S^*$.  

\bparagraph{Sum/Count/Average/Standard Deviation} A user can query the sum and count of the values over a
given range $S^*$ defined as above. The server executes the query and generates the following proof: 
$$ 
\pi = (N',L',r^*,v^*_1,\ldots,v^*_z,(h'_\node,c_{\node\,1},\ldots,c_{\node\,z},l_\node)_{\node \in L'}).
$$
In particular, it first computes the range cover proof $N'$ by calling ExtendRangeProof from \Cref{alg:range_proof} on the prefix tree root. It then computes the sums $v^*_1,\ldots,v^*_z$ and the total seed $r^*$
of the covered data entries. Let $L'$ be the set of leaves of the prefix tree in $N'$. The server then
determines for each \mbox{$\node \in L'$} the child hash $h'_\node =
h_{\textsc{left}(\node)}|h_{\textsc{right}(\node)}$, the commitments $c_{\node\,1},\ldots,c_{\node\,z}$, and the leaf
count $l_\node$.

Given $\pi$, the user's client first initializes $c^*_{j}=C(0,0)$ for \mbox{$j=1,\ldots,z$},
and $l^*=0$. Next, it verifies the range cover proof $N'$, then checks for every node $\node \in L'$ whether it holds that
 $$ \phi(\node) =
H(h'_\node|c_{\node\,1}|\ldots|c_{\node\,z}|l_\node), $$
i.e., whether $\phi(\node)$, the value stored in prefix tree leaf $\node$, is constructed as expected from the child hash, commitment, and leaf counts.  
If so, it updates $c^*_{j}$ to $c^*_{j} + c_{n\,j}$ for all $j=1,\ldots,z$ and $l^*$ to \mbox{$l^* + l_\node$}. Finally, it checks that $c^*_j = C(v^*_j,r^*)$ for all $j=1,\ldots,z$. If so, the user can compute statistics such as the sum $v^*_1$, count $l^*$, and average $v^*_1/l^*$. It also enables the computation of more complex query results such as the sample standard
deviation, i.e., $\sqrt{(v^*_2 - (v^*_1)^2/l^*)/(l^*-1)}$.

\Cref{fig:example_sum} visualizes the response to a sum query over all values in \Cref{fig:example_tree}. The server first reveals the range proof, which is the entire prefix tree as the sum query covers the entire dataset. Next, the server reveals $l^* = 8$, $v^*_1 = 182$, and $v^*_2=4604$, which are verified by the client. This allows the client to compute, e.g., the average (22.75) and the sample standard deviation ($\approx$8.137).

\begin{figure}
\centering
\includegraphics[width=\linewidth]{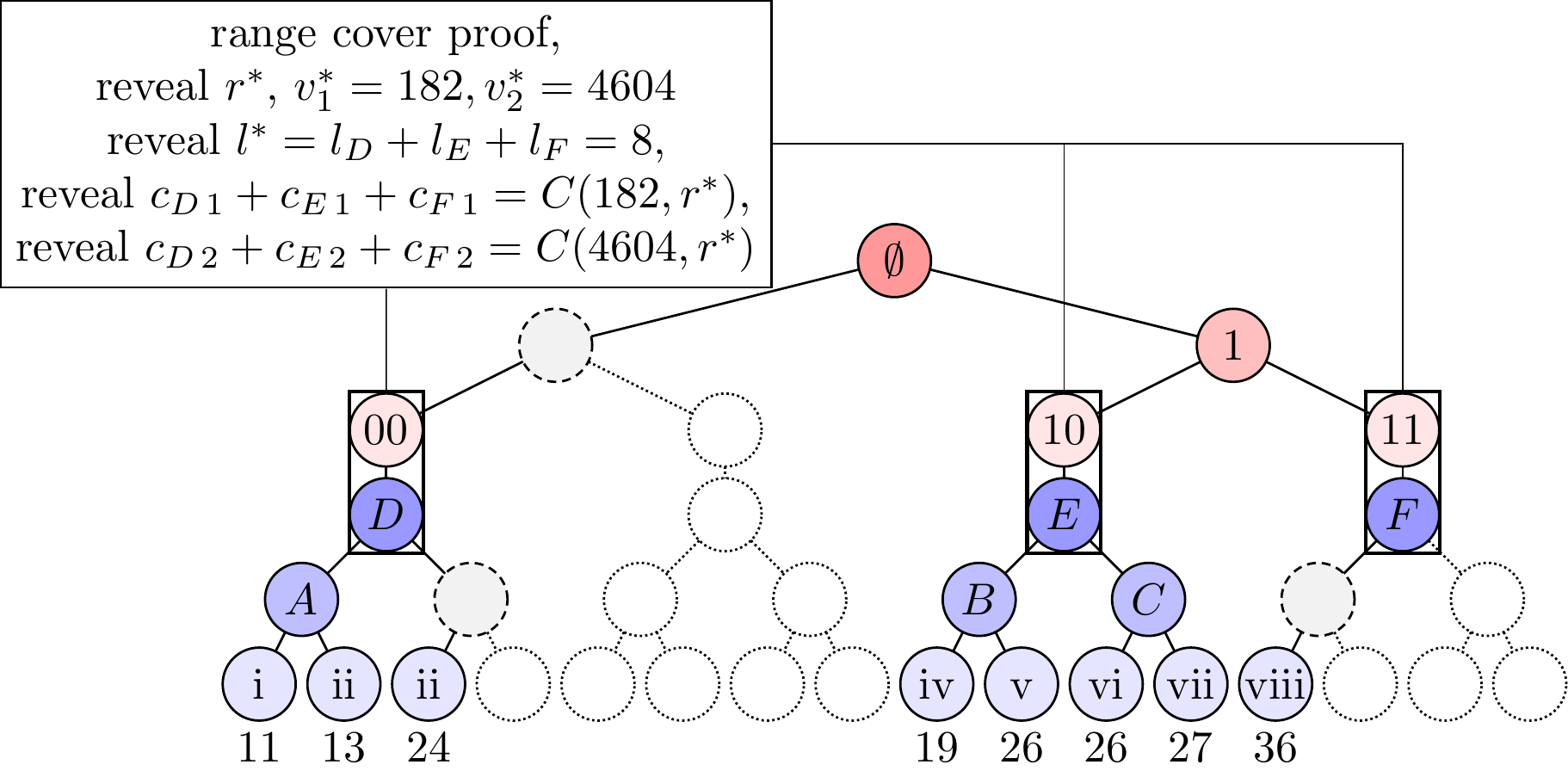}
\caption{Example of a sum query in the ADS of \Cref{fig:example_tree}.} 
\label{fig:example_sum} 
\end{figure}

\bparagraph{Min/Max}
A user can request the min value over a range $S^*$ at epoch $t$, after which the server returns a proof 
$$ \pi =
(N',L',v^*,\node^*,(c_{i\,1},\ldots,c_{i\,z},h'_{\node}, \pi'_\node, \pi^*_\node)_{\node\in L'}). $$ 
Here, $N'$ is the proof of the range query, $L'$ the set of prefix leaf nodes in $N'$, $v^*$
the minimum value in the range, and $i^*$ the index of a prefix
tree leaf whose sum tree contains $v^*$. For each node $i \in L'$, $c_{i\,1},\ldots,c_{i\,z}$ are the  commitments of the value in the \textit{leftmost} leaf in the sum tree of prefix tree leaf $\node$ -- since the
nodes are sorted, the leftmost leaf \textit{must} have the smallest value in the sum tree. The hash $h'_i$ equals the hash $H(u|t)$, such that $u$ and $t$ are the user ID and time in the leftmost leaf. Each inclusion
proof $\pi'_\node$, $\node \in L'$, asserts that the sum tree leaf with hash $H(c_{i\,1}|\ldots|c_{i\,z}|h'_{\node})$ is included in the sum tree whose root is stored in prefix tree leaf $i$. The zero-knowledge range proofs $(\pi^*_\node)_{\node \in L'}$ are as follows: if $\node = \node^*$, then the underlying value in the leftmost leaf must equal $v^*$, so $\pi^*_\node$ asserts that $c_{i\,1}$'s
underlying value is in the range $[v^*,v^*+1)$.\footnote{Alternatively, the server can reveal the random value associated with $v^*$.} Otherwise, $\pi^*_\node$ asserts that this value is in
the range $[v^*,\infty)$.

Given $\pi$, the user's client first checks whether the range cover proof $N'$ is correct. It then
 verifies for all $\node \in L'$ that $\pi_{\node}'$ is a valid inclusion proof for the leaf with hash $H(c_{i\,1}|\ldots|c_{i\,z}|h'_{\node})$, and that this leaf is indeed the leftmost leaf in the path. Next, it verifies the range proofs. By
treating $\node^*$ as a separate case, we guarantee that at least one sum tree contains the value $v^*$ in the
leftmost leaf. If all checks succeed, the user accepts $v^*$ as the min value.
The above is straightforwardly generalized to max queries by proving that the value in the \textit{rightmost} sum tree leaves is
at most equal to the maximum value $v^*$.

\bparagraph{Quantile} A user can request a quantile by specifying a range $S^*$ and a parameter
$q \in [0,1]$ that indicates which quantile to compute. A value $x$ is a $q$-quantile of a set $V$ with $n$ entries
if there are at least $nq$ entries in $V$ whose value is at most $x$ and at least $n(1-q)$ entries whose value is
at least $x$. 
The server executes the query and returns the proof
$$
\pi=(N',L',v^*,(\pi'_{\textsc{left}\,\node}, \pi^*_{\textsc{left}\,\node},\pi'_{\textsc{right}\,\node}, \pi^*_{\textsc{right}\,\node})_{\node \in L'}).
$$
Here, $N'$ is the proof of the range query, $L'$ the set of leaf nodes in $N'$, and $v^*$ the result of the query. For each $\node \in L'$, the
server determines $\textsc{leftmost}(\node,v^*)$, which is the leftmost leaf in the sum tree corresponding to $\node$
whose value is at least $v^*$, if such a leaf exists. It computes the inclusion proof
$\pi'_{\textsc{left}\,\node}$ for this leaf, and a zero-knowledge proof $\pi^*_{\textsc{left}\,\node}$ asserting that its underlying
value is at least $v^*$. Next, it determines $\textsc{rightmost}(\node,v^*)$, which is the rightmost leaf
in $\node$'s sum tree whose value is at most $v^*$, if such a leaf exists. It then
computes the inclusion proof 
$\pi'_{\textsc{left}\,\node}$ for this leaf, and a zero-knowledge proof $\pi^*_{\textsc{left}\,\node}$ asserting that
its underlying value is at most $v^*$.

Given $\pi$, the user first checks that the range cover proof $N'$ is correct. It then
iterates over the nodes $\node \in L'$, where $L'$ is the set of leaf nodes in $N'$, and verifies their 
inclusion and range proofs. The co-path of the Merkle tree inclusion proof includes the leaf count
$l_{{\node}'}$ for each node ${\node}'$ on the co-path; therefore, the user knows for each node on the co-path
whether its leaves are to the right of $\textsc{leftmost}(\node,v^*)$. Let $L^{\textsc{left}}(\node)$ be the number of leaves to the
right of node $\textsc{leftmost}(\node,v^*)$ in the sum tree. Let  
$L^{\textsc{right}}(\node)$ be the number of leaves to the
left of $\textsc{rightmost}(\node,v^*)$ in the sum tree. The user verifies that $\sum_{\node\in N'}
L^{\textsc{left}}(\node) \geq nq $ and \mbox{$\sum_{\node\in N'} L^{\textsc{right}}(\node) \geq n(1-q)$}, and accepts the
result $v^*$ if the verification is successful. 

\Cref{fig:example_quantile} visualizes the response to a query for the median (i.e., the $\frac{1}{2}$-quantile) over all values in \Cref{fig:example_tree}. In this case, the server can choose any value in the interval $[24,26]$ as a valid median, e.g., $v^*=26$. The server first shows that the rightmost leaf in the subtree for prefix `00', and the third leaf for prefix `10', have $v\leq$26. The client uses the $l$-values in the inclusion proofs to determine that there are at least 4 nodes to the left of these nodes, which means that at least 6 leaves have $v\leq$26. Finally, the server shows that the second leaf for prefix `10' and the leaf for prefix `11' have $v\geq$26. As there are 2 leaves to the right of the former leaf, there are at least $4$ leaves with $v\geq$26, which proves that $v^*=26$ is a valid median.

\begin{figure}
\centering
\includegraphics[width=0.85\linewidth]{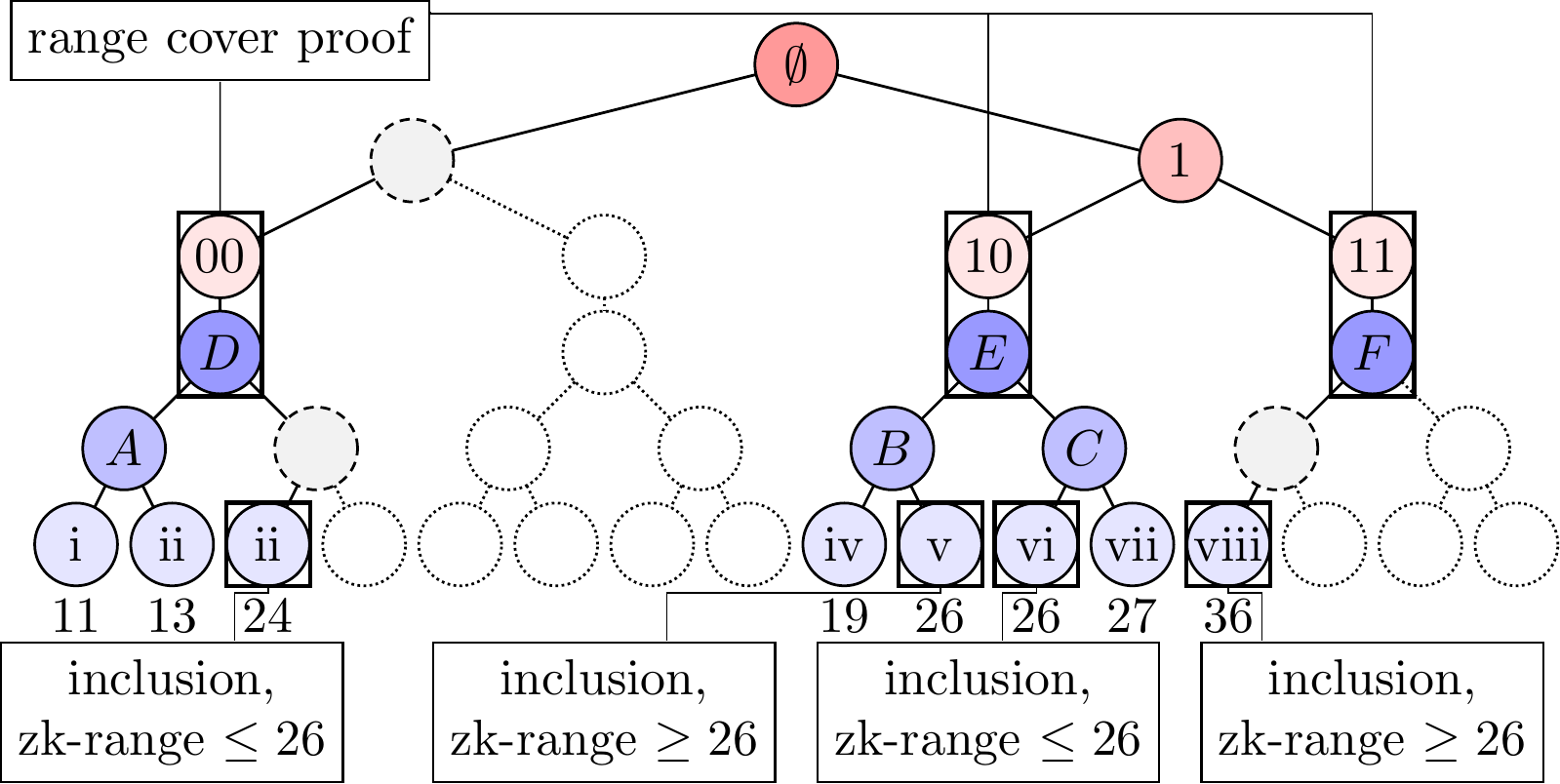}
\caption{Example of a quantile query in the ADS of \Cref{fig:example_tree}.} 
\label{fig:example_quantile} 
\end{figure}

\subsection{Auditing}
The auditor requests the server to generate proofs that show that the tree is append-only and that the sum
trees are sorted by the leaf values. 
For the proof that the tree at epoch $t$ is constructed from the tree at epoch $t'<t$ in an append-only manner,
the server includes each prefix tree leaf $\node$ whose Time attribute $s'_{\node}$ has $t' < s'_{\node} \leq t$. Furthermore, it sends a set of internal nodes from the tree at $t'$ such that the root of the tree at $t$ can be rebuilt from the hashes in these nodes. (This is analogous to the procedure in Section~V.A of \cite{hu2021merkle}.) The auditor then
requests the digests $\delta_{t'}$ and $\delta_t$ from the bulletin board and checks that it can rebuild the
two trees using the new leaves and the old internal nodes. To prove that the leaves in a sum tree with values $v_1,\ldots,v_{n^*}$ are sorted, the server does the following. For each
$i=1,\ldots,n^*-1$, it generates a zero-knowledge range proof that $v_{i+1} - v_{i} \geq 0$ (this is possible because \textsc{NIZK}'s underlying commitment scheme is additively homomorphic), and sends the proofs to the auditor.

\section{Analysis}
\label{sec:analysis}
In this section, we discuss how \name meets the requirements presented in \Cref{sec:requirements}. The server
handles data from multiple users and supports a broad range of queries -- i.e., all single-table queries supported by the
baseline approaches, and additionally quantile queries. In order words, \name satisfies requirement R1. In the
following, we discuss \name against requirements R2, R3a, R3b, and R4 in more detail. 
Due to space limitations, the formal security definitions can be found in the appendix, which we mention in the text when needed.

\subsection{Security (R2, R3a, and R3b)}
\bparagraph{Privacy} We discuss the impact on privacy of each of the supported operations separately.

\textit{Look-up: } an inclusion proof reveals a single value in a sum tree, but only if the user already knew that this value was included. A non-inclusion proof reveals a set of commitments.

\textit{Audit: } reveals the prefix tree and the commitments in the sum trees, neither of which are
privacy-sensitive.

\textit{Sum/count/average:} reveals the sums over the values in the sum trees that correspond to the prefix leaf nodes in the requested range, but no individual values.

\textit{Min/max:} reveals one unique value, namely the minimum/maximum value across the requested range, and the sum tree that contains this value.

\textit{Quantiles:} reveals a value that may correspond to a single unique value in a tree, depending on the values in the requested range and the choice of the server (e.g., if the requested range contains the values $(1,2,3,4)$, then any value in the interval $[2,3]$ is a median, but if it contains $(1,2,2,3)$ then $2$ is the unique median which is therefore revealed). However, it does not reveal which tree(s) (if any) contain this value.

\name queries can leak other values in two cases: the sum if the number of values in a sum tree is very low, and the quantile query if a user is allowed to perform an unlimited number of queries. This follows from the following two theorems, which are proved in \Cref{sec:security_model}.

\begin{theorem}
If a sum tree has $n$ leaves and $f$ values are known by a coalition of users, then the remaining $n-f$ values can be decrypted if and only if $n-f < 2$, regardless of many queries are executed.  
\label{thm:limited_sum}
\end{theorem}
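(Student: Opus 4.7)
The proof splits into sufficiency and necessity. For sufficiency ($n-f < 2$), the case $n-f=0$ is vacuous, and for $n-f=1$ I would invoke a sum query covering the sum tree: by the binding property of \textsc{COM} and the additive homomorphism, the reconstructed commitment $c^* = \sum_i c_i$ opens to a unique value $v^* = \sum_i v_i$, from which subtracting the $f$ known values yields the single unknown.

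For necessity, the plan is to fix $n-f \geq 2$ and exhibit two distinct admissible value assignments $V = (v_1,\ldots,v_n)$ and $V' = (v_1',\ldots,v_n')$ that agree on the $f$ coalition-known positions, yet produce computationally indistinguishable transcripts under every supported query type (look-up, sum/count/average, min/max, quantile, and the audit protocol). First I would reduce the transcript to ``explicit'' information: by the hiding of \textsc{COM} and the zero-knowledge of \textsc{NIZK}, the commitments and range proofs leak nothing beyond what the statements being proven already assert, so the coalition's view collapses to (i)~the public tree topology and hashes, (ii)~the sum $\sum_i v_i$, (iii)~the extremal values $\min_i v_i, \max_i v_i$, and (iv)~the quantile thresholds returned by any quantile queries.

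Given this reduction, I would construct $V'$ by picking two unknown positions $i < j$ whose neighbors in the sorted order are strictly separated from them and setting $v_i' = v_i + \epsilon$, $v_j' = v_j - \epsilon$ for an $\epsilon > 0$ small enough to maintain strict sortedness and to avoid crossing any boundary of an already-issued quantile response. Since $n - f \geq 2$, I can always choose $i, j$ so that neither coincides with the leaf that determines $\min$ or $\max$ over the range (if forced into an extremum, I pick a symmetric pair away from the boundary, which still exists when $n - f \geq 2$). The perturbation preserves $\sum_i v_i$ and the sort order, so sum/average/count, min/max, and every quantile response remain identical; look-up and audit proofs are unaffected because they do not depend on the underlying plaintexts of unopened leaves. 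A standard hybrid argument then lifts this information-theoretic indistinguishability to the commitment/NIZK layer.

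The main obstacle is the higher-power commitments $c_{i\,j} = C(v_i^j, r_i)$ for $j \leq z$: a sum query also reveals $v^*_j = \sum_i v_i^j$ for every $j$, so a simple $(+\epsilon, -\epsilon)$ perturbation generically changes $\sum_i v_i^2$ and breaks indistinguishability. Resolving this is the crux of the necessity direction. My plan is either to restrict the statement's query model to the first moment ($z = 1$ only for coalition queries) or, more ambitiously, to enlarge the perturbation to simultaneously zero out the first $z$ power-sum differences; the latter requires at least $z+1$ unknowns, so I would either tighten the theorem's threshold to $n - f \leq z$ or argue that the remaining queries in \name{} use only the first moment for coalition decryption purposes. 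I would document the chosen restriction carefully in the final write-up, since the rest of the argument---ordering, extrema, quantile boundaries, and the hybrid reduction to \textsc{COM}-hiding and \textsc{NIZK}-zero-knowledge---is routine once the moment issue is pinned down.
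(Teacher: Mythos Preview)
Your sufficiency direction matches the paper exactly. For necessity, the paper's own argument is far more terse than yours: it simply appeals to the hiding property of the commitment scheme (``an adversary cannot distinguish commitments from random values'') together with the arithmetic observation that the single disclosed sum $\sum_i v_i$ determines a unique unknown only when $n-f=1$. In particular, the paper does \emph{not} attempt the full indistinguishability-of-transcripts argument across all query types that you sketch. This is because \Cref{thm:limited_sum} is scoped to sum queries only; min/max is treated informally in the surrounding text and quantile leakage is the subject of the separate \Cref{thm:limited_quantile}. The paper also explicitly notes that the query language forbids sums over proper subsets of a single sum tree, so ``regardless of how many queries'' collapses to learning the single aggregate $\sum_i v_i$ (plus higher moments). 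Your perturbation construction and hybrid reduction are therefore considerably more than what the paper's proof sketch actually does, and the min/max/quantile cases you handle are out of scope for this theorem.

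That said, the obstacle you flag is real and is not addressed by the paper either. A sum query in \name discloses $v^*_j=\sum_i v_i^j$ for every $j\le z$, and from the first $z$ power sums one can recover the multiset of $z$ unknowns via Newton's identities (and the sorted order of leaves then pins down which leaf holds which value). Hence for $z\ge 2$ the statement ``$n-f\ge 2$ implies non-decryptable'' is false as written; the correct threshold is $n-f>z$, exactly as you conjecture. The paper's informal proof (``a hidden value can be obtained from the (known) sum \ldots only if $n-f=1$'') silently treats only the first moment and so shares this gap. Your instinct to either restrict to $z=1$ or to replace the bound by $n-f\le z$ is the right fix; you should state clearly that this is a correction to the theorem, not merely a limitation of your proof technique.
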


\begin{theorem}
If a sum tree has $n$ leaves and at most $f$ arbitrary quantile queries can be performed, then at most $f$ values can be decrypted. 
\label{thm:limited_quantile} 
\end{theorem}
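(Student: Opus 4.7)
The plan is to prove this via a simulation-based argument that charges each query with exposing at most one new value, namely the returned quantile $v^*$, and shows that everything else in the response is simulatable from public information. First I would unpack the quantile response $\pi$ from \Cref{sec:queries}: it contains the range cover $N'$ (whose contents are already revealed to auditors), together with, for each $\node\in L'$, Merkle inclusion proofs for the leftmost/rightmost sum-tree leaves consistent with $v^*$ and two zero-knowledge range proofs asserting that the underlying values are respectively $\geq v^*$ and $\leq v^*$. The commitments are hiding under $\textsc{COM}$, the inclusion co-paths reveal only hashes, commitments, and the counts $l_\node$ (all of which are disclosed during normal audits), and the range proofs disclose nothing beyond the asserted inequalities by the zero-knowledge property of $\textsc{NIZK}$.

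Second, I would build an explicit simulator that, given only the public ADS and the $f$ tuples $(S^*_j, q_j, v^*_j)$ for $j=1,\ldots,f$, produces transcripts that are computationally indistinguishable from real responses to $f$ arbitrary quantile queries. The simulator picks any sum-tree leaves compatible with the claimed quantile as the leftmost/rightmost candidates, invokes the NIZK simulator to forge each range proof, and reuses genuine Merkle inclusion proofs (whose siblings are audit-public anyway). Indistinguishability reduces to the hiding property of $\textsc{COM}$ and the zero-knowledge property of $\textsc{NIZK}$ via a standard hybrid over the $f$ queries. Since the entire transcript is a function of $\{v^*_1,\ldots,v^*_f\}$ plus public state, any value an adversary ``decrypts'' must already be determined by this set, which has cardinality at most $f$.

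The main obstacle will be \emph{composition}: a priori the adversary could try to cross-reference the inequality information released by many queries to pin down an additional unrevealed leaf. The simulation argument handles this, but only if the simulator can produce a \emph{globally consistent} choice of leftmost/rightmost leaves across all $f$ queries simultaneously. I would argue this is always possible by observing that the joint constraint imposed by $f$ quantile responses is a conjunction of half-line constraints of the form $v \geq v^*_j$ or $v \leq v^*_j$ on individual leaves; intersecting finitely many such constraints on any unrevealed leaf still yields an interval of positive length (since the leaf value is not one of the $v^*_j$'s by assumption), so no additional value becomes uniquely determined. A brief information-theoretic coda closes the argument: ``decryption'' of a value requires upper and lower bounds that coincide, and under the above inequality structure the only coincidences occur precisely at the revealed quantile values themselves, giving the claimed bound of $f$.
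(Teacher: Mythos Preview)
Your proposal is far more detailed than the paper's own treatment, which is a single sentence in the appendix: ``\Cref{thm:limited_quantile} follows from the knowledge that a quantile query reveals at most one value per query.'' So any rigorous argument here already goes beyond what the paper actually provides.

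That said, the simulation argument in your second paragraph has a genuine gap. You propose a simulator that is given only the public ADS and the tuples $(S^*_j,q_j,v^*_j)$ and that ``picks any sum-tree leaves compatible with the claimed quantile.'' But the real response fixes, via the Merkle inclusion proofs $\pi'_{\textsc{left}\,\node}$ and $\pi'_{\textsc{right}\,\node}$, the \emph{exact positions} of $\textsc{leftmost}(\node,v^*)$ and $\textsc{rightmost}(\node,v^*)$ in each sorted sum tree, and those positions are functions of the hidden leaf values, not of $v^*$ alone. Two datasets with the same $q$-quantile $v^*$ can place the cutoff at different indices, and the corresponding transcripts then name different leaves. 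Hiding of the commitments and zero-knowledge of the range proofs do not cover this: the leak is in the index, not in any opening. So the simulated and real transcripts are trivially distinguishable, and the reduction in your second paragraph does not go through.

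Your third paragraph is essentially the right argument, and it is the one the paper implicitly relies on. Once you concede that the positions \emph{are} revealed and translate them (via sortedness) into half-line constraints of the form $v_i < v^*_j$ or $v_i \geq v^*_j$ on each leaf, the statement becomes information-theoretic: a leaf value is determined exactly only when an upper and a lower half-line bound coincide, which forces that value to lie in $\{v^*_1,\ldots,v^*_f\}$. I would drop the simulator entirely and run this argument directly; presenting it as a ``globally consistent choice of leaves'' that rescues the simulation obscures the fact that the simulation route itself fails.
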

The privacy properties are proven in \Cref{sec:security_model} through a formally defined \textit{game}, $\GameP^{\Privacy}_{\TAP,\Adversary}$, in which the adversary $\Adversary$ wins if she is able to distinguish the user behind the revealed values. Given Theorem~\ref{thm:limited_sum}~and~\ref{thm:limited_quantile}, privacy can be achieved in practice if the server refuses to respond to sum queries over trees that have a limited number of leaves, and by restricting the range of values $q$ over which $q$-quantiles can be performed. Since the server can efficiently prove claims about the number of leaves in the sum trees (which is the same as proving the result of a count query), the server cannot refuse to respond to valid queries without being detected. In this case, the sum and quantile queries reveal at best a limited number of values. The other query types either reveal one (min/max) or zero (count) values per query, so in this case a limited number of values are returned by default. Note that  the querying language in \name does not allow users to query sums of subsets within a single sum tree, so a sum query reveals nothing about a sum tree except the sum over \textit{all} of its values. Finally, a prohibitively large number of look-up queries would be necessary to reveal a value that is unknown to the user, as the user would have to guess both the value and the random seed correctly. 

In \Cref{sec:differential_privacy}, we discuss strategies to achieve stronger privacy at the expense of transparency by adding noise to query results. If this noise has a protocol-wide bound $\bound$, then the server can efficiently prove that the claimed result is no further than $\bound$ away from the true value using zero-knowledge range proofs for all types of queries. If the noise is unbounded, then we can prove $\epsilon$-differential privacy \cite{dwork2006calibrating,shi2011privacy}. However, unbounded noise allows a malicious server to arbitrarily distort results, which violates transparency (the threat model for differential privacy assumes that the server is honest). On the contrary, if the noise is bounded then the mechanism may satisfy the notion of $(\epsilon,\delta)$-differential privacy. A summary can be found in \Cref{tab:noise_overview}.

 \begin{table}[t]
\centering
\caption{Overview of the impact of adding noise to measurements on integrity and privacy.}
\label{tab:noise_overview}
\scalebox{0.8}{
\begin{tabular}{C{0.275\linewidth}C{0.375\linewidth}C{0.4\linewidth}}
noise type & transparency & privacy \\ \toprule
no noise / return true result & query result is always provably accurate & limited number of values are revealed \\ \midrule
add bounded random noise & query result is accurate within some margin & $(\epsilon,\delta)$-differential privacy, $\delta>0$ \\ \midrule
add unbounded random noise & results can be arbitrarily distorted & provable $\epsilon$-differential privacy \\ \bottomrule
\end{tabular}
}
\end{table}

\bparagraph{Integrity} Honest users monitor the inclusion of their data in each epoch by issuing look-up queries for their own entries and verifying the inclusion proofs. Therefore, if the server tries to add incorrect
data for an honest user in an epoch, then it will be detected. Furthermore, the server cannot modify data from previous epochs, as this
will cause the extension proof to fail during an audit. In \Cref{sec:security_model}, this is formalized using the game $\GameP^{\Transparancy}_{\TAP,\Adversary}$, in which the adversary wins if she able to convince the user to accept a modified value, and which is used to prove both integrity and transparency.

\bparagraph{Transparency} For each type of query supported by \name, the user can verify that the result is
correct. For look-up queries, the user is provided with inclusion and non-inclusion proofs. For sum queries, the user is given inclusion and completeness proofs of the relevant nodes in the prefix tree, with which it
can verify the sum by exploiting the additively homomorphic  property of commitments in \name. For min, max, and quantile proofs, the
user is given zero-knowledge range proofs that assert that the committed value is greater than all,
none, or a specified number of the total nodes in the specified range. As the cryptographic primitives are
secure, the server cannot convince users to accept incorrect results. 

For adversarial or fake users who collude with the server, the server is free to add arbitrary data. As such, sum, average, min, and max queries can be arbitrarily distorted: for example, if the true sum is $v$ and the server wants to increase this to \mbox{$v'>v$}, then the server can select a single adversarial user $i$ whose true value is $v_i$ and increase it to $v' - v + v_i$. However, to distort a $q$-quantile query over a range with $n$ values, a server would require collusion with at least $\max(q,1-q)n$ users. As such, quantile queries are inherently more robust than sum, min, and max queries. To mitigate the impact of adversarial users on sum queries, the server can make the decision to bound all individual values by an individual bound $\gamma>0$. This may be of interest in use cases where extremely high values are unlikely -- e.g., the smart grid or congestion pricing use cases of \Cref{sec:use_cases}. In such, cases, the degree to which sum query results can be distorted is also limited: if $f$ out of $n$ users are adversarial, then the maximum possible distortion is $f\gamma$. Auditors can check that all individual measurements are below $\gamma$ through zero-knowledge range proofs. A similar approach was recently proposed in \cite{esorics21}.

\subsection{Performance (R4)}
\label{sec:performance_analysis}

Table~\ref{tab:comparison_asymptotic} compares the asymptotic costs of queries in \name to related systems. Here, $n$ denotes the number of data rows, $m$ the number of Type attributes (i.e., columns), $d$ the number of users, and $t$ the number of
epochs (i.e., \mbox{$n = O(td)$}). For queries that compute an aggregate over a range, $w$ denotes the number of epochs in the
range. %

\bparagraph{Storage} For storage costs, we consider both the size of the ADS and the underlying dataset, which has size $O(mn)$. \name stores only the prefix tree, which has $O(n)$ nodes, in memory and generates sum trees on-the-fly. The storage costs for \name are
smaller than those in the other systems except for CT. CONIKS regenerates the entire tree after each epoch,
so the storage cost of the ADS is $O(tn)$ \cite{hu2021merkle}. Merkle$^2$ uses a bigger tree than \name
because the internal nodes of Merkle$^2$'s chronological tree store prefix trees. IntegriDB and FalconDB create a
sorted tree for every combination of columns, leading to a total storage cost of $O(m^2n)$.  

\bparagraph{Insertion} Insertion is performed once per epoch with up to $d$ entries. The server needs to insert $O(d)$ new leaves into the prefix tree and a commitment tree has to be built for each prefix
tree leaf. Since the cost of inserting a new entry into the prefix tree is $O(\log n)$, the total cost is $O(d
\log n)$. \name has a higher insert cost than CT, CONIKS, and for reasonable values of $n$ also than
Merkle$^2$. However, \name performs better than IntegriDB and FalconDB because they need to perform one insert for
each of the $m^2$ trees in their ADS.

\bparagraph{Inclusion proof} To prove that a data entry exists in the tree, the server has to generate an
inclusion proof for a prefix tree leaf, generate the corresponding sum tree, and generate an inclusion proof
of the data value in the sum tree. The cost of the first and third steps is $\log n$, and the worst-case cost
of the second step is $O(d)$. Thus, the total cost is $O(d+\log n)$. CT, CONIKS, Merkle$^2$ are
 faster at generating the proof because they do not have to rebuild subtrees. In other words, \name can
improve at this cost by keeping all the sum trees in memory, at the cost of storage. However, we prioritize
storage reduction by generating the sum trees on-the-fly, as look-ups are normally only performed once per user per epoch.

\bparagraph{Non-inclusion proof} Non-inclusion proofs are similar to inclusion proofs, except that all leaves in the sum tree are included, of which there are $d$ in the worst case. The asymptotic cost is therefore \mbox{$O(d+\log n)$}. \name performs better than
CT, IntegriDB, and FalconDB for this proof. However, it performs worse than CONIKS and Merkle$^2$. This is to be
expected as non-inclusion proofs are heavily optimized in these two systems because it is one of their main use
cases. 

\bparagraph{Auditing} To audit a single epoch, the auditor has to verify the append-only proof and range
proofs for the sum trees. The former costs $O(d \log(n))$, and the latter costs $O(d)$. Therefore, the total
cost is $O(d\log(n))$. Only CT and Merkle$^2$ have built-in support for audits. In CONIKS, the tree is
rebuilt in each epoch, which makes auditing difficult. IntegriDB and FalconDB also do not support audits
beyond checking all entries in the $m^2$ trees. Asymptotically, \name is as efficient as CT and Merkle$^2$ --
however, it requires verifying $O(d)$ zero-knowledge range proofs, which are considerably more expensive to verify than Merkle tree
inclusion proofs. 

\bparagraph{Sum} To generate the proof for the sum query, the server first computes the range cover proof $N'$. The
worst-case number of prefix tree leaves in $N'$ is $w d$, as there are at most $d$ leaves per
epoch, and there are $w$ epochs. For each leaf node in the range, its $\log(n)$ parents are included in $N'$,
Therefore, the cost of generating the proof is $O(w d \log n)$. 

In IntegriDB and FalconDB, operations are performed on $m+1$ trees, one for each of the
Type attributes and one for the Time attribute. In each tree, the sum is calculated from the polynomials stored in the internal nodes that form the minimal covering set of the leaves in the range. There are $O(\log(wd))$ of such internal nodes in total. The total cost is therefore at least $O(m \log(wd))$, although the results in \Cref{sec:evaluation} suggest that the processing costs at the server also depend on $n$, which is why the entries for the sum in IntegriDB and FalconDB are marked with an asterisk in \Cref{tab:comparison_asymptotic}.

\bparagraph{Min/Max} The cost of min or max queries is similar to that of sum queries since they are based on
the same range cover proof, except that for each sum tree the server also needs to generate a range proof and an
inclusion proof. The asymptotic cost is still $O(wd \log n)$. For IntegriDB and FalconDB, the cost
is the same as for the sum, i.e., $O(m \log(wd))$.

\bparagraph{Quantile} This query requires at most two range and inclusion proofs per sum tree and a range cover proof -- its asymptotic cost is therefore also $O(w d \log n)$,
and no other baselines support this type of query.

\begin{table*}[htp]
\caption{Asymptotic costs of \name versus other systems -- here,  $n$ is the number of data entries, $m$ the number of columns, $d$ the number of
users, $t$ the number of epochs, and $w$ the number of epochs in the queried range. The asterisks (*) indicate that although the number of tree nodes returned for the proofs of sum, min, and max queries in IntegriDB is sub-linear in $d$, this is not necessarily the case for the size of the data stored in those nodes.}
\begin{center}
\scalebox{0.8}{
\begin{tabular}{c|c|ccccccc}
 \multicolumn{1}{c}{} & \multicolumn{1}{c}{} & \multicolumn{7}{|c}{asymptotic processing costs} \\ 
 & & insert $d$ rows& inclusion & non-inclusion & auditing & & \\ 
 & storage & in epoch & 1 row & 1 row & 1 epoch & SUM & MIN/MAX & quantiles \\ \toprule
CT 
		& $O(m n)$
		& $O(d \log n)$
		& $O(\log n)$ 
		& $O(n)$ 
		& $O(d \log n)$ 
		& ---
		& ---
		& ---
		\\ \midrule
CONIKS 
		& $O(m n + t n)$ 
		& $O(\log n)$
		& $O(\log n)$ 
		& $O(\log n)$
		& ---
		& ---
		& ---
		& ---
		\\ \midrule
Merkle$^2$ 
		& $O(nm + n \log n)$ 
		& $O(d \log^2 n)$
		& $O(\log^2 n)$ 
		& $O(\log^2 n)$ 
		& $O(\log n)$ 
		& ---
		& ---
		& ---
		\\ \midrule
IntegriDB 
		& $O(m^2 n)$
		& $O(d m^2 \log n)$
		& $O(m^2 \log n)$ 
		& $O(m^2 n)$ 
		& --- 
		& $O(m\log (wd))^*$
		& $O(m\log (wd))^*$
		& ---
		\\ \midrule
FalconDB 
		& $O(m^2 n)$  
		& $O(d m^2 \log n)$
		& $O(m^2 \log n)$  
		& $O(m^2 n)$ 
		& --- 
		& $O(m\log (wd))^*$
		& $O(m\log (wd))^*$
		& ---
		\\ \midrule
\textbf{\name}
		& $O(m n)$
		& $O(d \log n)$
		& $O(d + \log n)$
		& $O(d + \log n)$
		& $O(d + \log n)$ 
		& $O(w d \log n)$
		& $O(w d \log n)$
		& $O(w d \log n)$
		\\ \bottomrule
\end{tabular}
}
\end{center}
\label{tab:comparison_asymptotic}
\end{table*}

\begin{table}
\centering
\caption{Time and storage cost of cryptographic operations.} 
\scalebox{0.8}{
\begin{tabular}{c|cc}
operation & time cost (ms) & storage cost (B) \\ \toprule
\textsc{NIZK.Setup} & 11.450 & 22190 \\
\textsc{NIZK.Prove} & 134.780 & 74413 \\
\textsc{NIZK.Verify} & 70.040 & --- \\
generate commitment & 0.144 & 162 \\
sum two commitments & 0.010 & 165 \\ \bottomrule
\end{tabular}
}
\label{tab:crypto}
\end{table}

\section{Evaluation}
\label{sec:evaluation}
In this section, we evaluate the practical performance of \name. We first describe our implementation of \name, then
discuss our experimental setup, and finally present the empirical results. We conduct four types of
experiments: microbenchmarks on a single machine, end-to-end experiments with the user and server on different
machines, a performance comparison against two related baselines, and a scalability experiment that explores the limits of \name. The final three sets of experiments were run on Amazon Web Services (AWS) EC2. 

\subsection{Implementation \& Set-Up}

We have fully implemented \name in Go, and made the source code available at \gitrepo.  We base  
our prefix tree implementation on Merkle$^2$. For commitments and zero-knowledge range proofs, we use the
\texttt{zkrp} library from Morais et al.\ \cite{morais2019survey,zkrpgithub}. This
library uses Bulletproofs \cite{bunz2018bulletproofs} for zero-knowledge range proofs and Pedersen commitment
with the \texttt{secp256k1} elliptic curve \cite{brown2010standards} for additively homomorphic commitments. We use Go's MySQL module for the database backend.
We use the latest versions of the reference implementations of IntegriDB and Merkle$^2$ as of
June 2022 \cite{integridbgithub, merklesquaregithub}.
We use Merkle$^2$ as a baseline for look-up queries and audits, and IntegriDB -- despite having a different system model and support for a broader range of SQL queries -- for aggregate queries, as it is the most efficient approach with publicly available code that we are aware of. Although vSQL \cite{zhang2017vsql} reported comparable performance on a more general class of SQL queries than IntegriDB, we have not included vSQL as a baseline because its implementation is not publicly available.

The microbenchmarks were run on a MacBook Pro laptop with a 2.4 GHz Quad-Core Intel Core i5 processor and 16 GB
of RAM, with iOS 11.6. For the end-to-end experiments, we ran the server on a {\em t2.xlarge} instance and the
user on a {\em t2.micro} instance. For the comparison against IntegriDB and Merkle$^2$, we run all three
systems on {\em t2.xlarge} instances. The scalability experiments were run on a 16-core {\em m5.4xlarge} instance. For some scalability experiments, we took the average over multiple runs to reduce the impact of random noise. To aid reproducibility, we have made an AWS virtual machine image that is set up for the scalability experiments publicly available with identifier 
\href{https://aws.amazon.com/console/}{\texttt{ami-0935fdbddc542254e}}. 
The costs of various cryptographic operations in our system, when executed on the laptop, are shown in \Cref{tab:crypto}. 

\subsection{Microbenchmarks}
We first evaluate the bandwidth costs of the different queries on a single machine. For this experiment, we
consider $d=100$ users who each insert a new value per epoch. We have two Type columns: \textit{region} and
\textit{is\_industrial}. Each user is randomly assigned to one of 10 regions, and 20\% of the users have \textit{is\_industrial} set to 1 and the others have it set to 0. 

Figure~\ref{fig:different_queries_breakdown} compares the proof sizes of different queries, including 
audits. We observe several groups. The look-ups have the smallest proofs, as they only consist of two
Merkle tree co-paths. They are followed by the sum, count, and average, whose proofs include all of the prefix
tree leaves in the range query.  Next are the min and max, which add range proofs for each sum tree.
For quantile queries, we observe some difference between the median and the 5th percentile -- the former has a
larger proof size. The reason is that for each subtree, the query requires only a single inclusion and range proof
if all of the sum tree's values are either greater or smaller than the quantile, and two inclusion and range proofs otherwise.
  The former scenario is more likely for the 5th percentile. The proof for audits is the largest, as it contains $O(d)$
  range proofs. Even in this case, the size remains modest at $10$MB per epoch. 

\begin{figure}[t]
\centering
\subfloat[][]{\includegraphics[width=0.85\linewidth]{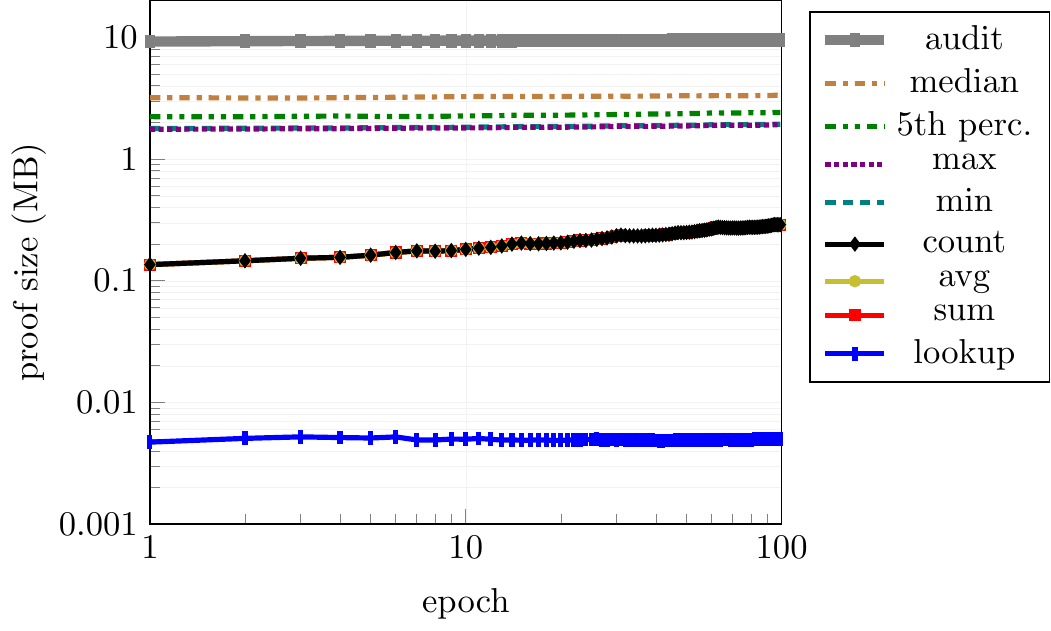}
\label{fig:different_queries_breakdown}
}
\\
\subfloat[][]{\includegraphics[width=0.9\linewidth]{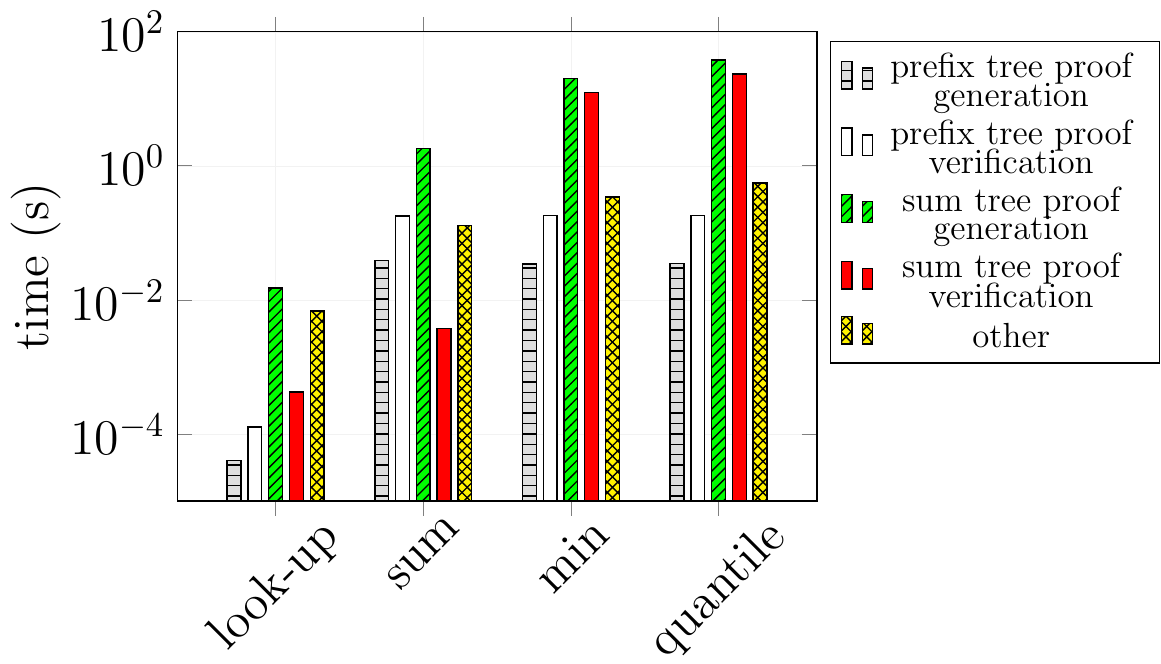}
\label{fig:time_breakdown}
}
\caption{(a) Proof sizes of different query types, (b) Time breakdown for different query types.}
\end{figure}

\subsection{End-to-End Performance}

We measure end-to-end latencies of different user queries in a setting with $n=1000$ and $m=5$. Figure~\ref{fig:time_breakdown} shows
the detailed breakdown. In particular, we divide the total time into five components:  
\begin{itemize}
\item Prefix tree proof generation: the time for generating the prefix tree inclusion proof for look-ups, and the
range cover proof 
for the other queries.
\item Prefix tree proof verification: the time for verifying the prefix tree proofs.
\item Sum tree proof generation: the time for generating the sum tree inclusion proof for look-ups, the hashes
needed to verify the commitments for sum queries, and the sum tree inclusion proof and range proofs for other queries.
\item Sum tree proof verification: the time for verifying the sum tree proofs. 
\item Other: network delay and any other costs.  
\end{itemize}

We observe that network delay is a significant factor in the end-to-end latency. As expected, this cost is proportional to the
proof sizes. The remaining cost is dominated by the cost of rebuilding the sum trees. For sum queries, the range cover
proofs have a large impact, but their costs are still an order of magnitude smaller than the sum tree proofs.  
The costs related to the sum tree proofs are especially large for min/max and quantile
queries because they are dominated by the cost of generating and verifying zero-knowledge proofs. The
end-to-end latencies for sum, min and quantile queries are $1s$, $23s$ and $60s$ respectively, which we believe to be reasonable for low-end virtual machines ({\em t2.xlarge} and {\em t2.micro}).

\subsection{Comparison Against Baselines}

\def\graphwidth{0.6}
\begin{figure*}[!t]
\centering
\makebox[\textwidth][c]{
\subfloat[][Server storage
cost.]{\includegraphics[scale=\graphwidth]{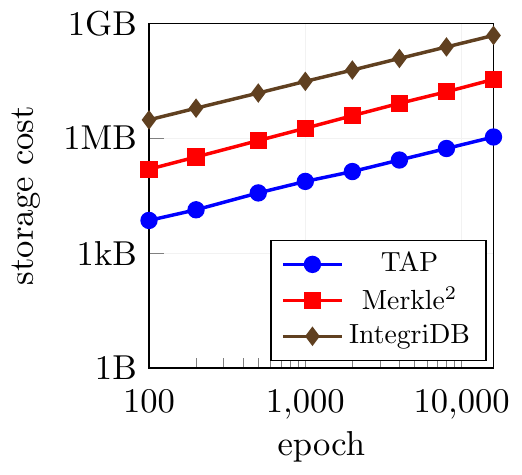}\label{fig:graph_storage}}
\subfloat[][Audit cost.]{\includegraphics[scale=\graphwidth]{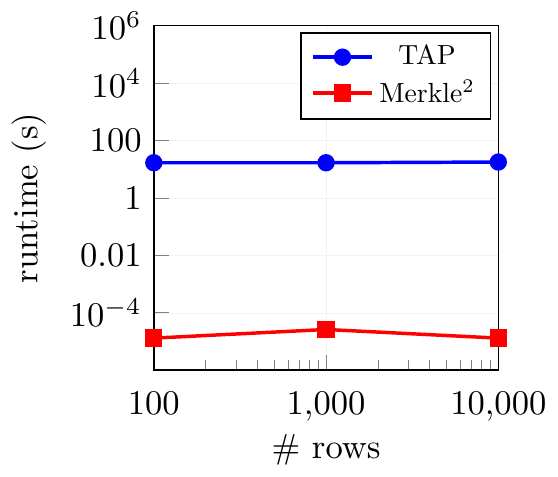}\label{fig:graph_auditor}}
\subfloat[][Insert query.]{\includegraphics[scale=\graphwidth]{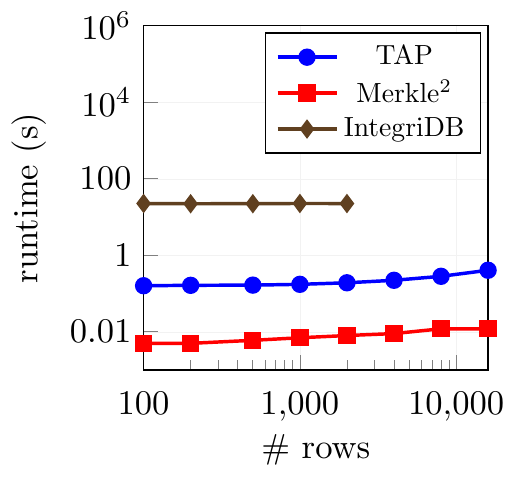}\label{fig:graph_insert}}
\subfloat[][Look-up query.]{\includegraphics[scale=\graphwidth]{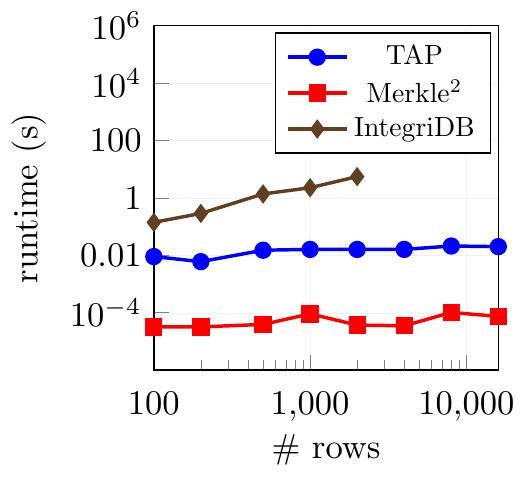}\label{fig:graph_lookup}}
\subfloat[][Sum query.]{\includegraphics[scale=\graphwidth]{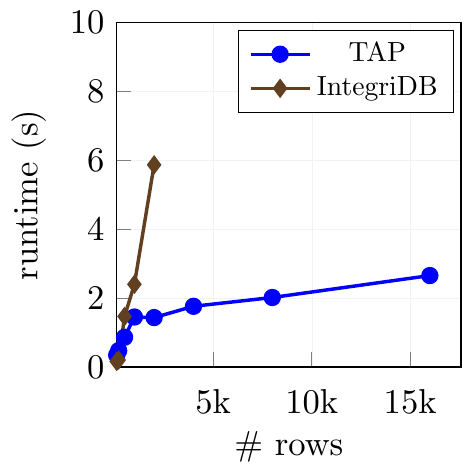}\label{fig:graph_sum}}
\subfloat[][Min/Max query.]{\includegraphics[scale=\graphwidth]{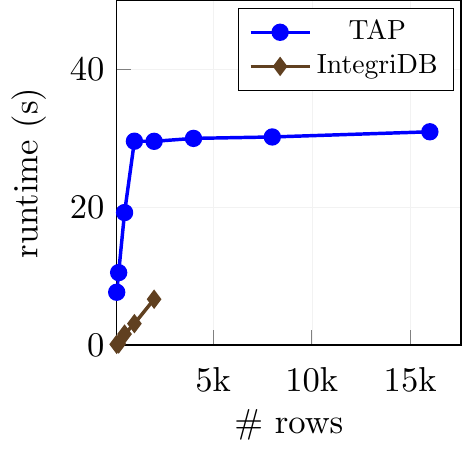}\label{fig:graph_min}}
}
\caption{Comparison between \name, Merkle$^2$, and IntegriDB in terms of storage, audit, and query costs.}
\end{figure*}

To provide empirical evidence for the asymptotic cost differences shown in \Cref{tab:comparison_asymptotic},
we compare \name against IntegriDB and Merkle$^2$ on different queries and with different data sizes.
Figure~\ref{fig:graph_storage} shows the storage costs for the three systems. Since \name does not store the Merkle
commitment trees but generates them on-the-fly during queries, it has the lowest storage cost.  IntegriDB has
the worst cost, as it needs to store at least
 25 copies of a tree with the same number of
leaves as Merkle$^2$.

In Figure~\ref{fig:graph_auditor}, we have displayed the audit costs of \name versus Merkle$^2$. We
omit IntegriDB from this graph as it does not support audits by default. It can be seen that the audits are
orders of magnitude more expensive in \name as in Merkle$^2$, which is due to the fact that an auditor in
\name has to evaluate dozens of zero-knowledge range proofs per epoch. However, we note that it takes only
$60s$ to audit an epoch with $100$ new entries on a low-end machine ({\em t2.xlarge}).  

\Cref{fig:graph_insert} displays the costs of inserting 100 data entries in an epoch. Merkle$^2$ is faster
than \name, as the former does not need to generate sum trees on-the-fly. However, IntegriDB has the worst
performance, because the new data needs to be inserted into at least 25 trees. The results
for look-up queries, shown in \Cref{fig:graph_lookup}, are similar to insertion queries, although the
  look-up costs for IntegriDB increase faster than the others. The IntegriDB reference implementation crashed when we performed
  queries on tables with more than 2000 rows.

Figure~\ref{fig:graph_sum} compares the cost of a sum query on the first 10 epochs for both \name and
IntegriDB. Although the proof size in IntegriDB depends only on the minimal covering sets of the values
that contribute to the sum, we observe that its overall cost appears linear in the number of rows.  We observe the
same for min queries, as shown in \Cref{fig:graph_min}. By extrapolating the line in \Cref{fig:graph_min}
beyond the point where the IntegriDB implementation crashed, we surmise that IntegriDB would be worse  than
\name before the table size reaches 10\,000 rows.

\subsection{Scalability}
\label{sec:scalability}

In this section, we investigate the performance of \name for realistic numbers of user IDs on a medium-end AWS machine ({\em m5.4xlarge}). \Cref{fig:scalability_tree} shows the cost of building \name's data structure for different numbers of IDs and sum trees. We observe a negligible difference between trees with 10 or 100 subtrees, but a noticeable difference between 100 and 1000 subtrees. In particular, building the tree takes roughly $50\%$ more time for 1000 subtrees than for 100 subtrees: the reason is that the construction of each subtree relies on a SQL select query to obtain the leaf values (i.e., $V_s$), which becomes a bottleneck when the number of subtrees is large. 

In \Cref{fig:scalability_audit}, we display the cost of a full audit as a function of the number of user IDs. We observe that for large numbers of IDs, the audit cost resembles a linear function of the number of IDs. For around $15\,000$ IDs, a full audit takes $420$ seconds, regardless of the number of subtrees.

\Cref{fig:scalability_quantile_all} displays the cost of a quantile query over the entire dataset. We see that the cost gradually becomes dependent only on the number of subtrees, as the main workload consists of creating zk-range proofs for each subtree. In \Cref{fig:scalability_quantile_limited}, we display the cost of querying a fixed range consisting of 10 subtrees. The cost of the query is independent of the total number of subtrees, and only has a logarithmic dependence on the number of IDs, which is invisible at realistic scales.

\begin{figure*}[t] \centering
\subfloat[][Tree construction.]{\includegraphics[scale=\graphwidth]{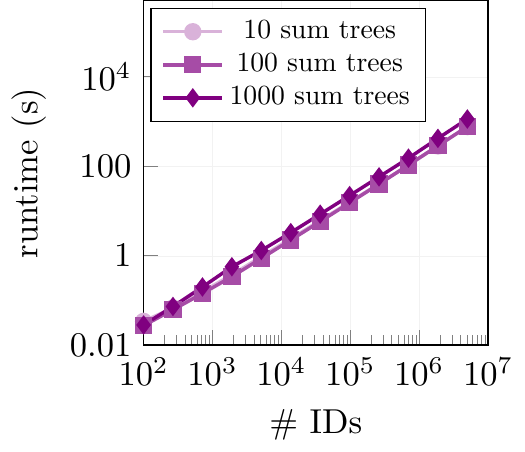}\label{fig:scalability_tree}}
\subfloat[][Audit.]{\includegraphics[scale=\graphwidth]{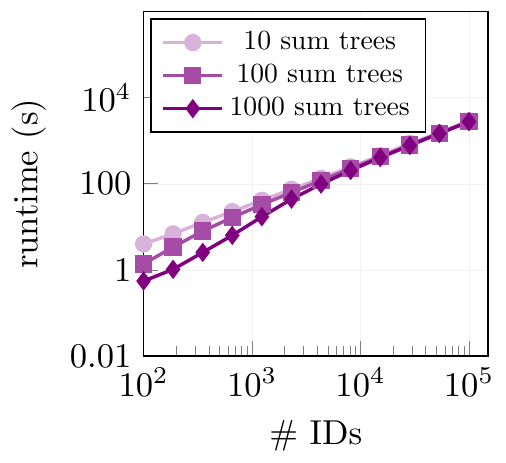}\label{fig:scalability_audit}}
\subfloat[][Quantile (full).]{\includegraphics[scale=\graphwidth]{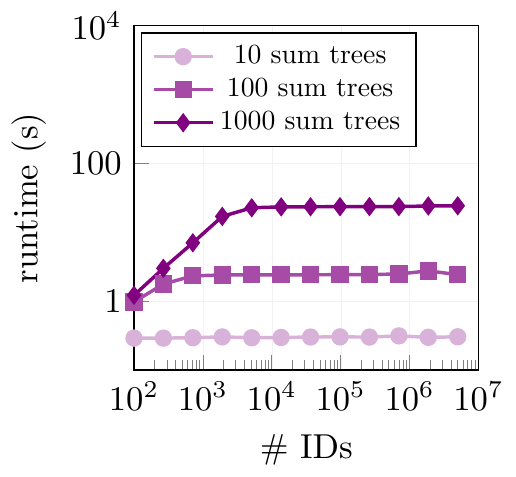}\label{fig:scalability_quantile_all}}
\subfloat[][Quantile (part).]{\includegraphics[scale=\graphwidth]{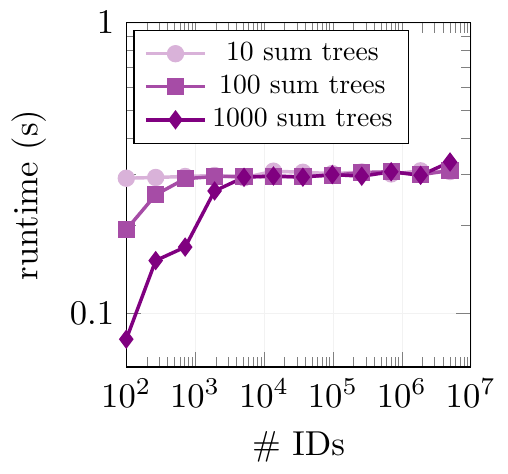}\label{fig:scalability_quantile_limited}}
\caption{Performance of \name for large numbers of user IDs.}
\end{figure*}

\section{Discussion \& Limitations}
\label{sec:discussion}

\textbf{Scaling Limitations.} From \Cref{fig:scalability_tree}, we observe that it takes around 285 seconds to generate \name's data structure for $\approx$1.8 million IDs and 100 subtrees. This is a reasonable overhead for the smart grid use case based on SP Group, which has around 1.6 million user IDs and 84 subtrees: if one tree has to be generated per hour, then even on an $\$1$/hour {\em m5.4xlarge} machine this only takes roughly $8\%$ of the available time. For the congestion pricing and digital advertising use cases, \name would not be able to support a system where every possible vehicle/website participates. However, for 250\,000 vehicles and 20 gantries, or for 500 websites and 10\,000 advertisers, the total number of user IDs would be 5 million, which a {\em m5.4xlarge} machine could process in 700 seconds.

With the same machine, an auditor can audit slightly over 100 user IDs per second. As such, the upper bound for feasibility for a full audit would be 360\,000 IDs for a system with 1-hour epochs. However, a randomized audit, in which a randomly chosen subset is audited, can be used for large systems to ensure that repeated misbehavior by the server is detected with high probability. Furthermore, targeted audits can be performed in cases where the operator profited from suspicious query results.

\textbf{Reporting Misbehavior.} The transparency property of \name ensures that if the server misbehaves in a verifiable way (e.g., signing a demonstrably false query result), then this can be detected and disseminated through gossip or the bulletin board. Cases in which the server misbehaves by falsifying user data are harder to detect and prove. In this case, we envisage that the user would appeal to a regulator or consumer watchdog. We assume that the server has a reputation to protect, which would be tarnished by frequent complaints.

\textbf{Incentives.} \name expects each user to independently and continuously verify her data and act as a whistleblower in the case of falsified data. As such, users in \name must have an incentive to monitor measurements. In the use cases of \Cref{sec:use_cases}, the users have a direct financial incentive to monitor their data because their bills or payments are directly impacted. 
From the perspective of the operator, transparency increases users' trust in the system and may help avoid frivolous lawsuits by being able to easily prove honest behavior.

\textbf{Trust Model.} In practice, a TAP user would not just rely on the operator for inserting data, but also for developing the app/client that verifies query proofs. In this setting, a malicious server could both falsify data \textit{and} deliberately insert bugs into the client to falsely convince users of the validity of the proofs. It is therefore vital that TAP users who require additional security have the ability to write their own code for proof verifications.

\textbf{User Identities.} TAP is impractical in settings where users are fully anonymous because such a setting would allow the server to create an unlimited number of fake users. However, we assume that a regulator would be able to detect the existence of (large numbers of) fake users, especially if they have a major impact on query results (e.g., a fake website that consistently achieves the highest click-through rate).

\section{Conclusions}
\label{sec:conclusions}
We have presented \name, a multi-user data service that provides data privacy, integrity, and transparency for user queries.  The ADS in \name combines a chronological prefix tree with sorted sum trees whose roots are stored in the prefix tree leaves. This data structure
allows \name to support a wide range of queries that are useful for emerging data services, with good
performance at scale. 

For future work, we plan to make the privacy costs to users explicit -- i.e., if an aggregate is taken
that includes a sum tree with very few leaves, then this reveals more information than if the sum trees each
have many leaves. 
This way, we can assign to
each user a privacy budget \cite{abadi2016deep}. 
Another interesting direction for future work is to add support for additional 
queries to \name, e.g., Spearman rank correlation between a recent set of measurements and another set of aggregates.

\section*{Acknowledgements}

This research / project is supported by the National Research Foundation, Singapore, under its National Satellite of Excellence Programme ``Design Science and Technology for Secure Critical Infrastructure'' (Award Number: NSoE\_DeST-SCI2019-0009), and by the National Research Foundation, Singapore, and Ministry of National Development, Singapore under its Cities of Tomorrow R\&D Programme (CoT Award COT-V2-2020-1). 
Tien Tuan Anh Dinh is supported by the Singapore University of Technology and Design startup grant, SRG ISTD 2019 144. Zheng Yang is supported by the Natural Science Foundation of China (Grant No. 61872051). Jianying Zhou is supported by A*STAR under its RIE2020 Advanced Manufacturing and Engineering (AME) Industry Alignment Fund - Pre Positioning (IAF-PP) Award A19D6a0053. 
Any opinions, findings and conclusions or recommendations expressed in this material are those of the author(s) and do not reflect the views of National Research Foundation, Singapore, Ministry of National Development, Singapore, or A*STAR.

\bibliographystyle{abbrv}
\bibliography{ref}

\appendix

\section{Pseudocode}
\label{sec:pseudocode}
This contains contains the pseudocode for three algorithms discussed in \Cref{sec:solution}. \Cref{alg:range_proof} determines a set of internal nodes in the prefix tree that covers the sum tree roots in the specified range, and is described under `Range Cover'. The range cover set acts as a proof of completeness for the given range, i.e., it demonstrates that no entries were incorrectly omitted or added by the server. It relies on another function, $\textnormal{RangeOverlap}(S^*,\node)$, which for a given interval node $\node$ in the prefix corresponds to a bit prefix that overlaps with the set $S^*$. (See also the function $\textnormal{DoesPrefixOverlapRange}$ in \mbox{trees/prefix\_tree.go} in the GitHub repository.) 

\Cref{alg:sum_count_server}  corresponds to the code that the server executes in response to a sum query, and is described under `Sum/Count/Average/Standard Deviation' in \Cref{sec:queries}. \Cref{alg:min_server} corresponds to the code that the server executes in response to a min query, and in described under `Min/Max' in \Cref{sec:queries}. The algorithm for quantiles is conceptually similar, and creates at most 2 range proofs for every sum tree.

\begin{algorithm}[h]
\caption{ Range proof (ExtendRangeProof)}
\label{alg:range_proof}
\footnotesize
\KwIn{node $\node$, node set $N$, range $S^*$}
\KwOut{node set $N'$}
\uIf{$\textnormal{RangeOverlap}(S^*,\node)$}{
	$N' = N \cup n$
	
	$L = \textsc{left}(\node)$
	
	$R = \textsc{right}(\node)$
	
	\uIf{$L \neq 0$}{
		$N' = N' \cup \textnormal{ExtendRangeProof}(L,N,S^*)$
	}
	\uIf{$R \neq 0$}{
		$N' = N' \cup \textnormal{ExtendRangeProof}(R,N,S^*)$
	}
	\Return $N'$
}
\Else{
  \Return $\emptyset$
 }
\end{algorithm}

\begin{algorithm}[h]
\caption{ Sum/Count query (Server)}
\label{alg:sum_count_server}
\footnotesize 
\KwIn{range $S^*$}
\KwOut{proof $\pi$}

$R \gets \textnormal{PrefixTree.GetRoot()}$

$N' \gets \textnormal{ExtendRangeProof}(R,\emptyset,S^*)$

$L'\gets \textnormal{GetLeaves}(N')$

$v \gets \textnormal{SQL.SumQuery}(S^*)$

\For{$j \in \{1,\ldots,z\}$}{
	$v^*_j \gets v^{j}$
}

$r^* \gets 0$

\For{$i \in L'$}{
	
	$r^* \gets r^* + \textnormal{GetTotalSeed}(\node)$

	$h'_\node \gets \textnormal{GetRootChildrenHash}(\node)$
	
	$c_{\node\,1},\ldots,c_{\node\,z} \gets \textnormal{GetRootCommitments}(\node)$
	
	$l_\node \gets \textnormal{GetRootNumLeaves}(\node)$
}

\Return $(N',L',r^*,v^*_1,\ldots,v^*_z,(h'_\node,c_{\node\,1},\ldots,c_{\node\,z},l_\node)_{\node \in L'})$

\end{algorithm} 

\begin{algorithm}[h]
\caption{ Min query (Server)}
\label{alg:min_server}
\footnotesize 
\KwIn{range $S^*$, large integer $K$}
\KwOut{proof $\pi$}

$R \gets \textnormal{PrefixTree.GetRoot()}$

$N' \gets \textnormal{ExtendRangeProof}(R,\emptyset,S^*)$

$L'\gets \textnormal{GetLeaves}(N')$

$v^* \gets \textnormal{SQL.MinQuery}(S^*)$

$P_{zk} \gets \textsc{NIZK.Setup}(1^{\kappa})$

$i^* \gets -1$

\For{$\node \in L'$}{
	
	$j^* \gets \textnormal{GetLeftmostLeaf}(\node)$
	
	$v \gets \textnormal{GetValue}(\node^*)$
	
	$c_{\node\,1},\ldots,c_{\node\,z} \gets \textnormal{GetLeafCommitments}(j^*)$
	
	$h'_\node \gets \textnormal{GetLeafUserTimeHash}(j^*)$
	
	$\pi'_\node \gets \textnormal{GenerateInclusionProof}(j^*,\node)$
	
	\uIf{$i^* = -1 \textbf{ and } v = v^*$}{
		$\pi^*_\node \gets \textsc{NIZK.Prove}(P_{zk}, v, [v^*, v^*+1))$
		
		$i^* \gets \node$
	} \Else {
		$\pi^*_\node \gets \textsc{NIZK.Prove}(P_{zk}, v, [v^*, K])$
	}
}
\Return $
(N',L',v^*,i^*,(c_{\node\,1},\ldots,c_{\node\,z}, h'_{\node}, \pi'_\node, \pi^*_\node)_{\node\in L'})$

\end{algorithm} 

\section{Security Model}
\label{sec:security_model}

\begin{table*}[htp]
\caption{Security games of TAP.} 
\label{tab:games}
\centering
\footnotesize
\begin{tabular}{|ll|}
\hline      
$\underline{\GameP^{\Transparancy}_{\TAP,\Adversary}(\kappa)}$                                                                                                                                                                                                           & $\underline{\GameP^{\Privacy}_{\TAP,\Adversary}(\kappa},\PrivacyType)$                                                                                                                                                \\
$\Qlist:=\emptyset$;                                                                                                                                                                                                                                                     & $\Qlist:=\emptyset$; $\CorruptList :=\emptyset$                                                                                                                                               \\
$(\ParmsTAP, k_s,\Delta_0) \leftarrow \TAPInitialize(1^\kappa)$                                                                                                                                                                                                          & $(\ParmsTAP, k_s,\Delta_0) := \TAPInitialize(1^\kappa)$                                                                                                                                       \\
$(i^*,t^*,\widetilde{\QueryType^*},\widetilde{\Delta_{t^*}},\widetilde{\TAPQueryRestult{t^*}},\widetilde{\TAPProof{t^*}},\widetilde{\TAPQueryMsg{i^*\,t^*}})\leftarrow \Adversary^{\OracleTAPQuery(\cdot,\cdot,\cdot,\cdot),\OracleInsertH(\cdot,\cdot)}(\ParmsTAP,k_s)$ & $(state, i^*,j^*,t^*,\TAPQueryMsg{0}^*,\TAPQueryMsg{1}^*)\leftarrow \Adversary^{\OracleTAPQuery(\cdot,\cdot,\cdot,\cdot),\OracleInsertH(\cdot,\cdot),\OracleCorrupt(\cdot,\cdot)}(\ParmsTAP)$ \\
$r_{i\,t} \leftarrow  \TAPEpochSecretGen(\ParmsTAP,k_s,i^*,t^*)$                                                                                                                                                                                                         & $\quad$, s.t. $t^*-1$ is the latest epoch, and all query messages have the same size,                                                                                                         \\
Return $\left((i^*,t^*,\widetilde{\QueryType^*},\widetilde{\Delta_{t^*}},\widetilde{\TAPQueryRestult{t^*}},\widetilde{\TAPProof{t^*}},\widetilde{\TAPQueryMsg{i^*\,t^*}}) \notin (\Qlist \cup \Hlist) \right)$                                                           & $b \rand \{0,1\}$                                                                                                                                                                             \\
$\quad \wedge\ \left((i^*,t^*) \in \Hlist \right)$ $\wedge\ \left(\TAPCheckEpoch(\ParmsTAP,\widetilde{\Delta_{t^*}})\right)$                                                                                                                                             & If $b=0$, then $\TAPQueryMsg{i^*\,t^*}:=\TAPQueryMsg{0}^*$ and $\TAPQueryMsg{j^*\,t^*+1}:=\TAPQueryMsg{1}^*$                                                                                  \\
$\quad \wedge\ \left(\TAPVerify(\ParmsTAP,r_{i^*\,t^*},\QueryType^*,\Delta_{t^*},\widetilde{\TAPQueryRestult{i^*\,t^*}},\widetilde{\TAPProof{i^*\,t^*}}\right)$                                                                                                          & Else $\TAPQueryMsg{i^*\,t^*}:=\TAPQueryMsg{1}^*$ and $\TAPQueryMsg{j^*\,t^*+1}:=\TAPQueryMsg{0}^*$                                                                                            \\
                                                                                                                                                                                                                                                                         & $(\Delta_{t^*},\TAPQueryRestult{i^*\,t^*},\TAPProof{i^*\,t^*}) \leftarrow \TAPQuery(\ParmsTAP,k_s,i^*,t^*,\Qinsert, \Delta_{t^*-1}, \TAPQueryMsg{i^*\,t^*})$                                  \\
$\underline{\OracleTAPQuery(i,t,\QueryType, \TAPQueryMsg{i\,t})}$:                                                                                                                                                                                                       & $(\Delta_{t^*+1},\TAPQueryRestult{j^*\,t^*+1},\TAPProof{j^*\,t^*+1}) \leftarrow \TAPQuery(\ParmsTAP,k_s,j^*,t^*,\Qinsert, \Delta_{t^*}, \TAPQueryMsg{j^*\,t^*+1})$                            \\
$(\Delta_t,\TAPQueryRestult{i\,t},\TAPProof{i\,t}) \leftarrow \TAPQuery(\ParmsTAP,k_s,i,t,\QueryType, \Delta_{t-1}, \TAPQueryMsg{i\,t})$                                                                                                                                 & $b' \leftarrow \Adversary^{\OracleTAPQuery(\cdot,\cdot,\cdot,\cdot),\OracleInsertH(\cdot,\cdot),\OracleCorrupt(\cdot,\cdot)}(\ParmsTAP,state,\Delta_{t^*},\Delta_{t^*+1})$                                                 \\
$(i,t,\QueryType,\Delta_t,\TAPQueryRestult{i\,t},\TAPProof{i\,t},\TAPQueryMsg{i\,t}) \rightarrow \Qlist$                                                                                                                                                                 & Return $(b=b')\ \wedge\ ((i^*,t^*) \notin \CorruptList)\ \wedge\ (j^*,t^*+1) \notin \CorruptList$                                                                                             \\
Return $\Delta_t,\TAPQueryRestult{i\,t},\TAPProof{i\,t}$                                                                                                                                                                                                                 &                                                                                                                                                                                               \\
                                                                                                                                                                                                                                                                         & $\underline{\OracleCorrupt(i,t)}$:                                                                                                                                                            \\
$\underline{\OracleInsertH(i,t)}$:                                                                                                                                                                                                                                       & $r_{i\,t} \leftarrow  \TAPEpochSecretGen(\ParmsTAP,k_s,i,t)$                                                                                                                                  \\
$\TAPQueryMsg{i\,t}\rand  \TAPMSpace$                                                                                                                                                                                                                                    & $(i,t,r_{i\,t}) \rightarrow \CorruptList $                                                                                                                                                    \\
$(\Delta_t,\TAPQueryRestult{i\,t},\TAPProof{i\,t}) \leftarrow \TAPQuery(\ParmsTAP,k_s,i,t,\Qinsert, \Delta_{t-1}, \TAPQueryMsg{i\,t})$                                                                                                                                   & Return $r_{i\,t}$                                                                                                                                                                             \\
$(i,t,\Qinsert,\Delta_t,\TAPQueryRestult{i\,t},\TAPProof{i\,t},\TAPQueryMsg{i\,t}) \rightarrow \Hlist$                                                                                                                                                                   &                                                                                                                                                                                               \\
Return $\Delta_t,\TAPQueryRestult{i\,t},\TAPProof{i\,t}$                                                                                                                                                                                                                 &                             
                                     \\ \hline
\end{tabular}
\end{table*}

Let $\kappa$ denote the security parameter, and $\emptyset$ the empty string. 
When $X$ is a set, $x \rand X$ denotes the action of sampling an element uniformly at random from $X$. %

\bparagraph{Syntax} 
We define a transparent and privacy-preserving data services (TAP) scheme using the following algorithms:

\hspace{0.2cm}$(\ParmsTAP, k_s,\Delta_0) \leftarrow \TAPInitialize(1^\kappa)$: run by the server. It takes as input the 
security parameter $\kappa$, and outputs system parameters $\ParmsTAP$, a secret key $k_s \in \TAPSKSpace$, and a mutable public verification state $\Delta_0 \in \TAPVTSpace$ of the TAP instance, where $\TAPSKSpace$ is a secret key space and $\TAPVTSpace$ is the space for public verification state. 

\hspace{0.2cm}$r_{i\,t} \leftarrow  \TAPEpochSecretGen(\ParmsTAP,k_s,i,t)$: run by the server. It takes as input the system parameters $\ParmsTAP$, the secret key $k_s$, and the epoch $t$, and outputs epoch secrets $r_{i\,t} \in  \TAPSSSpace$ for  user $i$, where $\TAPSSSpace$ is an epoch secret space. 

\hspace{0.2cm}$(\Delta_t,\TAPQueryRestult{i\,t},\TAPProof{i\,t}) \leftarrow \TAPQuery(\ParmsTAP,k_s,i, t,\QueryType, \Delta_{t-1}, \TAPQueryMsg{i\,t})$: run by server. It
takes as input the system parameters $\ParmsTAP$, secret key $k_s$,  a query type $$\QueryType \in \{\Qinsert,\Qlookup, \Qsum, \Qcount, \Qaverage,\Qminmax,\Qquantile\}$$ from a user $i$ for epoch $t$, the public verification state $\Delta_{t-1}$, and a query message $\TAPQueryMsg{i\,t} \in \TAPMSpace$, and outputs a query result $\TAPQueryRestult{i\,t} \in \TAPRSpace$ and the corresponding proof $\TAPProof{i\,t} \in 
\TAPProofSpace$, and an updated public verification state  $\Delta_t$ (if $\QueryType=\Qinsert$), where $\TAPProofSpace$ is a proof space $\TAPMSpace$ is the query message space, and $\TAPRSpace$ is a query result space. 

\hspace{0.2cm}$\{0,1\} \leftarrow \TAPVerify(\ParmsTAP,r_{i\,t},\QueryType,\Delta_t,\TAPQueryRestult{i\,t},\TAPProof{i\,t})$: run by a user $i$. It takes as input the system parameters $\ParmsTAP$, epoch secret $r_{i\,t}$, a query type $\QueryType$, the public verification state $\Delta_t$, and a query result $\TAPQueryRestult{i\,t}$ obtained from server for epoch $t$, and the corresponding proof $\TAPProof{i\,t}$, and outputs True (1) or False (0).

\hspace{0.2cm}$\{0,1\} \leftarrow\TAPCheckEpoch(\ParmsTAP,\Delta_t)$: run by auditor. It takes as input the system parameters $\ParmsTAP$ and the public verification state $\Delta_t$ for epoch $t$, and outputs True (1) if the verification state $\Delta_t$ is valid, and False (0) otherwise.

Given $(\ParmsTAP, k_s,\Delta_0) := \TAPInitialize(1^\kappa)$, any  user $i$'s epoch secret $r_{i\,t}$, any valid query message  $\TAPQueryMsg{i\,t} \in \TAPMSpace$ for a time
epoch $t$, and $(\Delta_t,\TAPQueryRestult{i\,t},\TAPProof{i\,t}) := \TAPQuery(\ParmsTAP,k_s,i,t,\QueryType, \Delta_{t-1}, \TAPQueryMsg{i\,t})$, we say that a TAP scheme is correct if  
$\TAPVerify(\ParmsTAP,r_{i\,t},\QueryType,\Delta_t,\TAPQueryRestult{i\,t},\TAPProof{i\,t}) = 1$ and $\TAPCheckEpoch(\ParmsTAP,\Delta_t) = 1$.

\bparagraph{Security Properties} 
Here, we define three properties of TAP: \emph{integrity}, \emph{transparency}, and \emph{privacy} (to achieve the requirements R2, R3a, and R3b). In Table~\ref{tab:games}, we formulate these properties via two games  $\GameP^{\Transparancy}_{\TAP,\Adversary}$ and $\GameP^{\Privacy}_{\TAP,\Adversary}$ 
 running between a challenger and an adversary $\Adversary$, respectively. We model both integrity and transparency of TAP in one game $\GameP^{\Transparancy}_{\TAP,\Adversary}$ for simplicity, since they share most of the  procedures and winning conditions in the game.

 In both games, the adversary is allowed to ask an oracle query $\OracleTAPQuery(i,t,\QueryType, \TAPQueryMsg{i\,t})$ to query any message $\TAPQueryMsg{i\,t}$ of her own choice for any query type $\QueryType$. Via this query, the adversary can either insert a malicious message into the data structure and also learn values based on compromised epoch secrets. Meanwhile, the epoch secrets can be compromised based on the oracle query  $\OracleTAPQuery(i,t,\QueryType,\TAPQueryMsg{i\,t})$. In addition, $\Adversary$ may ask the oracle query $\OracleInsertH(i,t)$ which is used to insert honest messages into the target TAP instance. By doing so, $\Adversary$ will try to break the transparency properties for some honest inserted messages. We model those security properties in a multiparty setting (to cover the requirement R1) because $\Adversary$ can ask those queries with an arbitrary user identity.

 In $\GameP^{\Transparancy}_{\TAP,\Adversary}$, the goal of $\Adversary$ is to generate a malicious message and the corresponding query results for an honest user $i^*$ and epoch $t^*$  (i.e., $(i^*,t^*,\widetilde{\QueryType^*},\widetilde{\Delta_{t^*}},\widetilde{\TAPQueryRestult{t^*}},\widetilde{\TAPProof{t^*}},\widetilde{\TAPQueryMsg{i^*\,t^*}})$) that are not generated by the challenger during the game but can pass the verification of either the honest  user $i^*$ or the auditor. To model privacy, the game $\GameP^{\Privacy}_{\TAP,\Adversary}$ is defined based on indistinguishability. Since the Min/Max and the quantile queries  would leak the concrete value of some user (without knowing its identity), we model the privacy by letting the adversary distinguish the owner of a malicious value (chosen by the adversary) from two honest parties for all kinds of queries. This approach can also cover the privacy of the (unleaked) value of a specific user as well. After all, if the adversary can know the value of a given honest user then she must be able to break the privacy formulated by  $\GameP^{\Privacy}_{\TAP,\Adversary}$ (i.e., distinguish the owner of the value). This leads to the following definition.

\begin{definition}
A transparent and privacy-preserving data services scheme $\TAP$ is secure if the advantages  
$$\Adv^{\Transparancy}_{\TAP,\Adversary}(\kappa)=\Pr[ \GameP^{\Transparancy}_{\TAP,\Adversary}(\kappa)=1]$$ and  $$\Adv^{\Privacy}_{\TAP,\Adversary}(\kappa)=\left|\Pr[ \GameP^{\Privacy}_{\TAP,\Adversary}(\kappa)=1] -1/2 \right|$$ of any PPT adversaries $\Adversary$ in the corresponding games are negligible. 
\label{def:security}
\end{definition}
\noindent Given $\GameP^{\Privacy}_{\TAP,\Adversary}$, the proof that TAP as discussed in \Cref{sec:solution} satisfies the properties of integrity, transparency, and integrity is similar to the proofs of Theorems 1--4 in \cite{esorics21}. Informally, \Cref{thm:limited_sum} can be proven using the property that an adversary cannot distinguish commitments from random \mbox{values}, and that a hidden value can be obtained from the (known) sum and the coalition's $f$ known values only if $n-f=1$. \Cref{thm:limited_quantile} follows from the knowledge that a quantile query reveals at most one value per query. 

\section{Differential Privacy}
\label{sec:differential_privacy}
In \Cref{sec:solution}, we have presented \name, a data service architecture in which the server returns accurate responses to user queries at the cost of revealing a limited number of user values. However, a stronger notion of privacy may be required in some contexts. For example, if the Value attribute corresponds to power usage, and a single residence is known to be the biggest power consumer in its neighborhood with high probability, then an adversarial user can learn this residence's exact power usage through a max query on this neighborhood.
\textit{Differential privacy}~\cite{dwork2006calibrating,shi2011privacy} provides a stronger notion of privacy for a data service. 
In differential privacy, data is obfuscated through the addition of random \textit{noise}.
We focus on an approach in which the server adds
noise to query results before returning them to the user~\cite{pinq}. 
An alternative design would be for the users to add random noise to their data according to a
fixed probability distribution, which would allow servers to compute aggregates without learning the data
of individual users~\cite{rappor}. However, the latter approach would prevent the server from obtaining data that may be necessary for, e.g., billing, so we focus on the former.

Improving privacy through the addition of noise necessarily reduces transparency: the noise values must be hidden (or else the obfuscated data can be reconstructed), so it is impossible to verify if they were generated in accordance with the agreed probability distributions. As such, we focus on a weaker notion of transparency, namely  that a malicious server is unable to \textit{arbitrarily} distort query results by manipulating the noise generation process.
As is common in the differential privacy literature, we focus on making the magnitude of the noise dependent on the \textit{sensitivity}, i.e., the maximum possible impact on a query result by removing a single user's value.
We can then utilize zero-knowledge range proofs to prove that the noise is within some range defined by the sensitivity. 
In the following, we focus on a single query and leave an investigation into the effects of multiple queries over the same range as future work.

Let $\data$ be the dataset, i.e., the set of values covered by the query's range. Let $\trueres(\data)$ be the query's \textit{true} result, and $\res(\data)$ the result that is returned by the server.
 Let $\RR$ be the result space, so that $\res(\data),\trueres(\data) \in \RR$. In our setting, $\RR$ is the space on which our commitments and zero-knowledge range proofs are defined. Let $\DD$ be the set of \textit{pairs} of datasets such that for any \mbox{$(\data,\datb) \in \DD$} a unique value $d \in \data$ exists such that $\data / \{d\} = \datb$, or $d \in \datb$ such that $\datb / \{d\} = \data$ -- i.e., those pairs that differ in exactly one value. In this setting, the protocol satisfies $(\epsilon,\delta)$-differential privacy if, for all $(\data,\datb) \in \DD$ and all $S \subset \RR$,
\begin{equation}
	\P(\res(\data) \in S) \leq e^\epsilon \cdot \P(\res(\datb) \in S) + \delta.
	\label{eq:differential_privacy}
\end{equation}
If $\delta = 0$, then the protocol satisfies ``pure'' $\epsilon$-differential privacy, whereas if $\delta>0$ it satisfies``approximate'' differential privacy. Let the \textit{sensitivity} of the query result $\res$ be defined as  
\begin{equation}
	\sens = \max_{(\data,\datb) \in \DD} | \trueres(\data) - \trueres(\datb) |.
	\label{eq:sensitivity}
\end{equation}
A general result for $\RR = \mathbb{R} $ \cite{dwork2006calibrating} states that returning a result $\res$ such that
	$$
	\res(\data) = \trueres(\data) + \noisea,
	$$
	where $\noisea$ is a Laplace-distributed random variable with scale parameter $\sigma$, guarantees $\epsilon'$-differential privacy with $\epsilon' = \sens /\sigma$. In particular, $\sigma = \sens$ guarantees $\epsilon$-differential privacy.

A challenge in our context is that the Laplace distribution has positive density on the entire interval $[-\infty,\infty]$. This is acceptable in cases where the threat model assumes that the server is always honest. However, in our case it would allow a malicious server to add \textit{arbitrarily large} noise to the true result, and therefore convince the user to accept any value desired by the malicious server. 
To limit the scope for server misbehavior, noise should instead be drawn from a bounded interval. 
Prior results in this area for truncated Gaussian \cite{liu2018generalized} and Laplace \cite{holohan2018bounded} noise indicate that although $\epsilon$-differential privacy cannot be achieved in this case, $(\epsilon,\delta)$-differential privacy is possible. 
In \cite{dagan2020bounded}, optimal asymptotic bounds were derived, but these are not practical in our setting as these bounds depend on hidden constants. 
As such, we focus in the following on a generic approach for noise on the bounded interval  $\{-\bound,-\bound+1,\ldots,\bound\}$ for a constant $\bound$.
\bparagraph{Theorem} In the following, let  $g : [0,\bound] \rightarrow [0,\infty)$ and \mbox{$G(z) = \sum_{x = 0}^z g(x) dx$} such that $g(0) +2G(\bound-1) = 1$. We then define the noise $\noisea = \res(\data) - \trueres(\data)$ as a random variable on $\{-\bound,-\bound+1,\ldots,\bound\}$ with the cumulative distribution function
\begin{equation}
\P(\noisea \leq z) = \left\{ \begin{array}{cl} G(z+\bound)  & \text{if } z \in [-\bound,0], \\ 1 - G(\bound-z+1) & \text{if } z \in [0,\bound]. \end{array} \right.
\label{eq:noisedist}
\end{equation}
Note that the probability distribution of $Z$ is symmetric, i.e., $\P(Z \leq -z) = \P(Z \geq z)$ for $z \in [0,\bound]$. 
Similarly, let $\noiseb = \res(\datb) - \trueres(\datb)$ be the noise under $\datb$, where $\P(\noiseb \leq z)$ is also as defined in \eqref{eq:noisedist}. We assume that $\res(\data) > \res(\datb)$ (this is w.l.o.g.\ because both $(\data,\datb)$ and $(\datb,\data)$ are part of $\DD$). We then have the following result.
\begin{theorem}
For the random variable $Z$ as defined in \eqref{eq:noisedist}, and for $\bound \geq \sens$ with $\sens$ as defined in \eqref{eq:sensitivity}, the mechanism $\res(\data) = \trueres(\data) +Z$ satisfies $(\epsilon,\delta)$-differential privacy with 
$$
\epsilon = \log\left(\max_{a \in A'} \frac{\P(\res(\data)=a)}{\P(\res(\datb)=a)}\right), \delta = G(\sens-1), \textit{ and}
$$
$$A' = \{a \in \mathbb{N}: a \geq \trueres(\data)-d, a \leq \trueres(\datb)+d\}.
$$
\end{theorem}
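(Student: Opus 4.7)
The plan is to verify the definition of $(\epsilon,\delta)$-differential privacy in \eqref{eq:differential_privacy} directly, exploiting the fact that the noise is supported on a bounded, symmetric window. I would fix an arbitrary pair $(\data,\datb)\in\DD$ with $\Delta := \trueres(\data)-\trueres(\datb)\in[0,\sens]$ (WLOG), and an arbitrary $S\subseteq\RR$, then decompose
$$
\P(\res(\data)\in S) = \P(\res(\data)\in S\cap A') + \P(\res(\data)\in S\setminus A').
$$
The first step is to identify $A'$ as precisely the intersection of the supports of $\res(\data)$ and $\res(\datb)$: since $Z$ lies in $\{-\bound,\ldots,\bound\}$, the support of $\res(\data)$ is $[\trueres(\data)-\bound,\trueres(\data)+\bound]$ and similarly for $\datb$, and under $\bound\geq\sens$ the intersection is exactly the set $A'$ as defined in the theorem.

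On the overlap $S\cap A'$, the bound $\P(\res(\data)\in S\cap A')\leq e^{\epsilon}\,\P(\res(\datb)\in S\cap A')$ follows immediately from summing the pointwise inequality that defines $\epsilon$ in the theorem statement. This part requires no computation: $\epsilon$ is literally chosen to make the ratio bound hold on $A'$.

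All remaining work therefore goes into bounding the tail $\P(\res(\data)\in S\setminus A')$. Points $a\in S\setminus A'$ either lie below the support of $\res(\data)$ and contribute zero, or satisfy $a>\trueres(\datb)+\bound$. Hence it suffices to bound $\P(\res(\data)>\trueres(\datb)+\bound)=\P(Z>\bound-\Delta)$. Using the right-hand branch of the CDF in \eqref{eq:noisedist}, I would rewrite this tail probability in terms of $G$, observe that it is monotone non-decreasing in $\Delta$, and maximize over $\Delta\in[0,\sens]$, which pins the worst case at $\Delta=\sens$ and yields $\delta=G(\sens-1)$.

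The main obstacle I anticipate is exactly the bookkeeping in this last step: the CDF in \eqref{eq:noisedist} is written in a slightly asymmetric ``shifted'' form, so getting the integer indexing right -- in particular showing that $\P(Z>\bound-\Delta)$ simplifies cleanly to $G(\sens-1)$ rather than to $G(\sens)$ or $G(\sens+1)$ -- requires careful use of the normalization $g(0)+2G(\bound-1)=1$ and of the boundary convention at $z=0$. A minor secondary subtlety is the boundary case $\Delta=\sens$ at $a=\trueres(\datb)+\bound$, where the inequality must be taken as strict to avoid double-counting the boundary point that still lies in $A'$. Once these indexing issues are settled, combining the overlap bound with the tail bound reassembles \eqref{eq:differential_privacy} with the stated $\epsilon$ and $\delta$, completing the proof.
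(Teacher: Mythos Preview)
Your proposal is correct and follows essentially the same approach as the paper: the paper partitions $\RR$ into five explicit intervals $A_1,\ldots,A_5$ (with your $A'$ being their $A_3$, the overlap of the two supports), handles the $e^\epsilon$ ratio bound on $A_3$ and the $\delta$ tail on $A_2\cup A_4$ separately, and then recombines by a union bound. Your overlap/tail decomposition is a slightly more streamlined phrasing of the same argument, and your anticipated indexing obstacle (landing on $G(\sens-1)$ rather than $G(\sens)$) is exactly where the paper's computation sits as well, via the symmetry $\P(Z>\bound-\sens)=\P(Z<\sens-\bound)=\P(Z\leq\sens-\bound-1)=G(\sens-1)$.
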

\begin{proof}

We first note that if $\bound \geq \frac{1}{2} \sens$, which is a condition of the theorem, then $\RR$ can be divided into five intervals: 
\begin{center}
\begin{tabular}{rcrl}
$A_1$ & $=$ & $(-\infty$,&$\trueres(\datb)-\bound)$, \\
$A_2$ & $=$ & $[\trueres(\datb)-\bound$,&$ \trueres(\data)-\bound)$, \\
$A_3$ & $=$ & $[\trueres(\data)-\bound$,&$ \trueres(\datb)+\bound]$, \\
$A_4$ & $=$ & $(\trueres(\datb)+\bound$,&$ \trueres(\data)+\bound]$, \\
$A_5$ & $=$ & $(\trueres(\data)+\bound$,&$ \infty)$,
\end{tabular}
\end{center}
which have the following properties:
\begin{center}
\begin{tabular}{cc}
$\forall S \subset A_1$, & $\P(\res(\data) \in S) = 0$ \;and\; $\P(\res(\datb) \in S) = 0$, \\
$\forall S \subset A_2$, & $\P(\res(\data) \in S) = 0$ \;and\; $\P(\res(\datb) \in S) > 0$, \\
$\forall S \subset A_3$, & $\P(\res(\data) \in S) > 0$ \;and\; $\P(\res(\datb) \in S) > 0$, \\
$\forall S \subset A_4$, & $\P(\res(\data) \in S) > 0$ \;and\; $\P(\res(\datb) \in S) = 0$, \\
$\forall S \subset A_5$, & $\P(\res(\data) \in S) = 0$ \;and\; $\P(\res(\datb) \in S) = 0$. \\
\end{tabular}
\end{center}
We first prove the main theorem for any set $S$ that is a subset of either of these five sets. For $S \subset A_1 \cup A_5$ we trivially have that \eqref{eq:differential_privacy} holds for all $\epsilon \in \mathbb{R}$ and $\delta=0$ because 
$$
\P(R(D) \in S) = \P(R(D') \in S) = 0.
$$
For $S \subset A_2$, we have that $\P(\res(\data) \in S) = 0$ and \mbox{$\P(\res(\datb) \in S) > 0$}. We find that 
\begin{equation}
\begin{split}
\P(\res(\datb) \in S)
& \leq \P(\res(\datb) \in A_2) \\ 
& = \P(\noiseb < \trueres(\data) - \trueres(\datb) - \bound) - \P(\noiseb < -\bound) \\
& \leq \P(\noiseb < \sens - \bound), \\
\end{split}
\end{equation}
 where the last inequality holds because of the definition of the sensitivity and because $\P(\noiseb < -\bound) = 0$.  As we assume $\bound > \sens$, $\sens - \bound < 0$ and $\P(\noiseb < \sens - \bound) = G(\sens-1)$, and in this case we have $(\epsilon,\delta)$-differential privacy with $\epsilon=0$ and $\delta = G(\sens-1)$.

For $S \subset A_4$, we have that $\P(\res(\data) \in S) > 0$ and \mbox{$\P(\res(\datb) \in S) = 0$}. We find that 
\begin{equation}
\begin{split}
\P(\res(\data) \in S)
& \leq \P(\res(\data) \in A_4) \\ 
& = \P(Z \leq \bound) - \P(\noisea \leq \trueres(\datb)- \trueres(\data)+\bound) \\
& = \P(\noisea > \trueres(\datb)- \trueres(\data)+\bound) \\
& = \P(\noisea < \trueres(\data)-\trueres(\datb)-\bound) \\
& \leq \P(Z < \sens-\bound) \\
\end{split}
\end{equation}
In the above, we use $\P(Z \leq \bound) = 1$, $\P(Z \leq \bound) = 1 - \P(Z > \bound)$, symmetry of the distribution of $Z$, and $\trueres(\datb) - \trueres(\data)+\bound \geq \bound-\Delta \geq 0$. Again, we have $(\epsilon,\delta)$-differential privacy with $\epsilon=0$ and $\delta = G(\sens-1)$.

For $S \subset A_3$, we have that $\P(\res(\data) \in S) > 0$ and \mbox{$\P(\res(\datb) \in S) > 0$}. Let 
$$
\pp = \max_{a \in A_3} \frac{\P(\res(\data)=a)}{\P(\res(\datb)=a)}
$$
Then
$$
\P(\res(\data)=a) = \frac{\P(\res(\data)=a)}{\P(\res(\datb)=a)} \P(\res(\datb)=a) \leq \pp \P(\res(\datb)=a)
$$
which means that we have $(\epsilon,\delta)$-differential privacy with $\epsilon = \log(\pp)$ for any $S = \{a\}$, which extends to any set $S \subset A_3$ by a union bound.

Finally, having shown $(\epsilon,\delta)$-differential privacy for any $S$ that is a subset of either of $A_1,\ldots,A_5$, we can use a union bound to establish $(\epsilon,\delta)$-differential privacy for any union of those subsets. This completes the proof of the theorem.

\end{proof}

\end{document}